\renewcommand{\mathbf}{\boldsymbol}
\renewcommand{\thepage}{}
\newcommand{\mmv}{\mathbb{V}}
\newcommand{\numcl}{M}
\newcommand{\htauf}{\hat{\tau}^{\rm fe}}
\newcommand{\indep}{\perp\!\!\!\perp}
\newcommand{\obi}{X_i,\os_g}
\newcommand{\ova}{\overline{A}}
\newcommand{\uw}{\underline{W}}
\newcommand{\ux}{\underline{X}}
\newcommand{\mme}{\mathbb{E}}
\newcommand{\mma}{\mathbb{A}}
\newcommand{\os}{\overline{S}}
\newcommand{\ow}{\overline{W}}
\newcommand{\owx}{\overline{WX}}
\newcommand{\ox}{\overline{X}}
\newcommand{\pr}{{\rm pr}}
\newcommand{\ou}{\overline{U}}
\newcommand{\oui}{\overline{U}_{g(i)}}
\newcommand{\osi}{\overline{S}_{g(i)}}
\newtheorem{corollary}{Corollary}
\newtheorem{assumption}{Assumption}[section]
\newtheorem{theorem}{Theorem}
\newtheorem{lemma}{Lemma}
\newtheorem{prop}{Proposition}
\def\monthname{\ifcase\month\or
  January\or February\or March\or April\or May\or June\or July\or
  August\or September\or October\or November\or December\fi}
\numberwithin{equation}{section}
\def\monthname{\ifcase\month\or
January\or February\or March\or April\or May\or June\or
July\or August\or September\or October\or November\or December\fi}
\renewcommand{\appendix}{\small\parindent 0cm\setcounter{equation}{0}
\renewcommand{\theequation}{A.\arabic{equation}}
\setcounter{lemma}{0}\renewcommand{\thelemma}{A.\arabic{lemma}}
\setcounter{theorem}{0}\renewcommand{\thetheorem}{A.\arabic{theorem}}}
\begin{document}

\title{\textbf{
Fixed Effects and the Generalized Mundlak Estimator}\thanks{{\small We are grateful for comments by seminar participants at Harvard, MIT, Princeton, Brown, NYU, the SIEPR lunch at Stanford,  the International Association of Applied Econometrics meeting in Montreal, Manuel Arellano, Matias Cattaneo, Pat Kline,  and Michael Kolesar. We are grateful to 
Nicola Lacetera, Mario Macis, and Robert Slonim for sharing their data and making it publicly available.  We are also grateful to Greg Duncan for raising questions this paper tries to answer. The data and code underlying this research is available on Zenodo at https://doi.org/10.5281/zenodo.8226636. This research was generously supported by ONR grants N00014-17-1-2131 and N00014-19-1-2468 and a gift from Amazon. A previous version of this manuscript  circulated under the title ``The Role of the Propensity Score in Fixed Effect Models'' \citep{arkhangelsky2018role}.}} }
\author{Dmitry  Arkhangelsky \thanks{{\small  Associate Professor, CEMFI, CEPR darkhangel@cemfi.es. }} \and Guido W. Imbens\thanks{{\small Professor of
Economics,
Graduate School of Business and Department of Economics, Stanford University, SIEPR, and NBER,
imbens@stanford.edu.}} }
\date{\ifcase\month\or
January\or February\or March\or April\or May\or June\or
July\or August\or September\or October\or November\or December\fi \ \number
\year\ \ }
\maketitle\thispagestyle{empty}

\begin{abstract}
\singlespacing
\noindent We develop a new approach for estimating average treatment effects in observational studies with unobserved group-level heterogeneity. We consider a general model with group-level unconfoundedness and provide conditions under which aggregate balancing statistics -- group-level averages of functions of treatments and covariates -- are sufficient to eliminate differences between groups. Building on these results, we reinterpret commonly used linear fixed-effect regression estimators by writing them in the Mundlak form as linear regression estimators without fixed effects but including group averages. We use this representation to develop Generalized Mundlak Estimators (GMEs) that capture group differences through group averages of (functions of) the unit-level variables and adjust for these group differences in flexible and robust ways in the spirit of the modern causal literature.
\end{abstract}

\noindent \textbf{Keywords}: fixed effects, cross-section data, groups, causal effects, treatment effects, unconfoundedness.

\begin{center}
\end{center}



\baselineskip=20pt
\setcounter{page}{1}
\renewcommand{\thepage}{\arabic{page}}
\renewcommand{\theequation}{\arabic{section}.\arabic{equation}}

\section{Introduction}

A common specification for regression functions when estimating causal effects with grouped  data is a fixed effect model:
\begin{equation}
\label{een}
Y_i=\alpha_{g(i)}+W_i\tau+X_i^\top\beta+\varepsilon_i,
\end{equation}
where the index $g(i)$ indicates the group  a unit $i$ belongs to, 
the $\alpha_g$  are the group  fixed effects, and $\varepsilon_i$ is an error term, independent of the regressors $W_i$ and $X_i$  (see  \citet*{
arellano2003panel, angrist2008mostly,wooldridge2010econometric} for  textbook discussions).
 In empirical work the groups (also referred to as strata, subpopulations, or clusters) may correspond to states, cities, MSAs, classrooms, birth cohorts, families, siblings, or other geographic or demographic groups.  
The model is typically  estimated by least squares. We refer to the corresponding estimator as the fixed effect estimator. The parameter $\tau$, which we would like to interpret as an average causal effect of the treatment  $W_i$ (binary through most of this paper), is the object of interest. The fixed effects $\alpha_g$  are intended to capture unobserved differences between the groups. The motivation for including these fixed effects in the specification is  that their presence improves the credibility of a causal interpretation of the fixed effect estimator for $\tau$ ({\it e.g.,} \citet*{arellano2003panel, angrist2008mostly}). Inference for $\tau$ is typically based on asymptotic approximations with  a growing number of groups and a fixed number of units per group. In that setting one cannot rely on consistent  estimation of the fixed effects because of the incidental parameter problem ({\it e.g.,} \citet*{neyman1948consistent, lancaster2000incidental}). 

In this paper we make two sets of contributions. First, we  unpack the assumptions underlying the fixed effects estimator based on specification (\ref{een}), and second, we re-interpret the fixed effect estimator and use that re-interpretation to propose a new class of estimators.

The popularity of the  fixed effect estimator may be due partly because it is viewed as accounting for differences between groups in a flexible way. However, the specification in (\ref{een}) embodies multiple strong assumptions. The current paper clarifies these assumptions by separating them  into three parts. First, the specification in  (\ref{een}) assumes a constant treatment effect. In practice, it is likely the effects of the treatment are heterogeneous.  We show that in general the average effect of the treatment is {\em not} point-identified. We then describe the estimand corresponding to the fixed effect estimator under general treatment-effect heterogeneity and demonstrate that it estimates, under some conditions, a weighted average treatment effect, with the weights depending in an unusual way on the sampling scheme. For example, we show that in the presence of treatment effect heterogeneity the interpretation of the fixed effect estimator changes if we double the number of units we sample per cluster.  Second,  a key assumption motivating (\ref{een}),
what we call group unconfoundedness (Assumption \ref{as:unconf_cl_un} below), validates all within-group comparisons of treated and control units with the same covariates. However, the additional assumptions
underlying (\ref{een})  also implicitly validate some, but not all, between-group comparisons of treated and control units with the same covariates. We describe which between-group comparisons are validated by the fixed effect specification. Third, the fixed effect specification   assumes linearity and additivity in the fixed effects, covariates, and treatments.

In the second set of contributions, we propose a new class of estimators.
To motivate these new estimators we  start with the re-interpretation of the fixed effect estimator due to \cite{mundlak1978pooling}.
Instead of thinking of the fixed effect estimator for $\tau$ as corresponding to the specification in (\ref{een}), we can also think of it as the least squares estimator for $\tau$
corresponding to the specification
\begin{equation}
\label{twee0}
Y_i=W_i\tau+ \overline{W}_{g(i)}\gamma+X_i^\top\beta+
\overline{X}_{g(i)}^\top\delta+\eta_i,
\end{equation}
where, throughout the paper, we use $\overline{W}_g$ and $\overline{X}_g$ to denote the average value of $W_i$ and $X_i$ for units in group $g$, that is, for units with $g(i)=g$.
Compared to (\ref{een}), this specification does not have the fixed effects $\alpha_g$. Instead, it has two additional regressors, the group averages of $W_i$ and $X_i$. The least squares estimators for $\tau$ based on (\ref{een}) and (\ref{twee0}) are identical, as first noted by \cite{mundlak1978pooling} (see also \cite{wooldridge2021two}).
We can therefore interpret the fixed effect estimator as controlling for group differences by simply including group averages of the covariates as additional regressors in a linear regression. This interpretation motivates our exploration of new estimators, which we refer to as Generalized Mundlak Estimators (GMEs). These estimators, for suitable defined average treatment effects  as made precise in Section \ref{section:identification}, maintain group unconfoundedness (Assumption \ref{as:unconf_cl_un}) which justifies within-group comparisons of treated and control units. We then formulate additional assumptions that allow for some, but not all, cross-group comparisons of treated and control units. The estimators can be thought of as starting with a parsimonious specification as in  (\ref{twee0}), and, like (\ref{twee0}), are based on adjusting for group characteristics to justify cross-group comparisons. They differ from (\ref{twee0}) in that they free up the specification or change the estimation procedure in three distinct ways, inspired by the modern literature on estimating average treatment effects under unconfoundedness ({\it e.g.,} \citet{imbens2015causal, abadie2018econometric}). 
First, we can adjust for differences in the group averages in a more flexible, nonlinear way, for example, by including nonlinear functions of the averages $\overline{W}_{g}$ and $\overline{X}_{g}$ in the regression function in ({\ref{twee0})}. Second, we can make the estimator more robust by  estimating the propensity score and using inverse propensity score weighting in combination with regression to create doubly robust estimators (\citet*{robins1995semiparametric,  zubizarreta2015stable, chernozhukov2017double, athey2018approximate}). Third, we can account for more general differences between groups than accounted for by the fixed effect estimator by adjusting for differences in averages of other functions of the covariates $X_i$  and the treatment indicators $W_i$, for example, the average of the product, $\overline{WX}_{g}$.
We show how the choice of group characteristics to adjust for can be motivated by structural choice models in Section \ref{sec:discussion}.

We can capture the three modifications in the proposed Generalized Mundlak Estimator relative to the fixed effect estimators by an unconfoundedness assumption that strengthens group unconfoundedness. Instead of conditioning on group indicators, it assumes it is sufficient to condition on a set of group characteristics:
\[W_i\ \indep\ \Bigl(Y_i(0),Y_i(1)\Bigr)\ \Bigl|\ X_i,
\overline{H}_{g(i)},\]
where $Y_i(0)$ and $Y_i(1)$ are potential outcomes for unit $i$ (\citet*{neyman1923,rubin1974estimating, imbens2015causal}), and $\overline{H}_{g(i)}$ is the
group average  of some pre-specified function $H(W_i,X_i)$  of the treatment indicators and the covariates. We refer to $\overline{H}_{g(i)}$ as a balancing score, following the terminology in \citet{rosenbaum1983central}. By the unconfoundedness condition, treated and control units with the same values for covariates and balancing statistics can be compared even if they belong to different groups.
If $H(W_i,X_i)=(W_i,X_i)$ and the adjustment is only linear, this gets us back to the fixed effect estimator, but in general the proposed Generalized Mundlak Estimator  differs in two ways. First, it adjusts more flexibly for the components of $\overline{H}_{g(i)}$, and second,  $H(\cdot)$ can contain more components than just $(W_i,X_i)$.

Our formal asymptotic results focus on the case with a fixed number of units per group. In particular, we discuss settings where the group size is not  large enough to carry out a two-stage procedure where we first estimate the effects entirely within groups by flexibly adjusting for covariates, followed by averaging over the groups. In other words, because of the incidental parameter problem (\citet*{neyman1948consistent, lancaster2000incidental}), we need to rely on comparisons of treated and control units in different groups. In addition, there is  concern that  accounting for the group differences solely  through additive fixed effects as in specification (\ref{een}) is not sufficient to adjust for all relevant differences ({\it e.g.,} \citet*{altonji2005cross,imai2019should}). Finally, in this setting we cannot estimate the propensity score, that is, the conditional probability of treatment assignment, as a function of group membership.

 \section{Examples}\label{section:simple}
 In this section, we discuss the main insights of the paper in the context of two examples. We start with a case with no covariates and present the interpretation of (\ref{een}) under general heterogeneity in treatment effects. We then introduce binary covariates and connect (\ref{een}) to a particular unconfoundedness condition. We finish the section with a set of practical recommendations.
 
\subsection{No covariates}
 
 A key assumption underlying most group analyses is that assignment is unconfounded within the groups. Let each unit $i$ in a large population be characterized by a pair of potential outcomes  $(Y_i(0),Y_{i}(1))$ and group label $L_{g(i)}$.  These labels can correspond to names, \textit{e.g.}, geographic locations, legal entities, and markets. We use the labels at this stage to make clear that there is not necessarily an ordering to the groups, nor a distance measure that allows us to say that some groups are closer to each other. Suppose each unit is assigned to a binary treatment $W_i$.
  The sampling process has two stages. First, we randomly sample $\numcl$ groups from a large population of groups. Second, we sample $N_g$ units from group $g$, for each of the sampled groups. In the absence of individual-level covariates, we can express group unconfoundedness in the following way (\citet*{rosenbaum1983central}):
 \begin{equation}\label{eq:sim_exp_unc_clusters_old}
W_i\ \indep\ \Bigl(Y_i(0),Y_i(1)\Bigr)\ \Bigl|\ L_{g(i)}.
\end{equation}
This assumption validates comparisons between treated and control units within groups.

As we show formally in the next section, this condition implies a different independence restriction:
 \begin{equation}\label{eq:sim_exp_unc_clusters_w}
W_i\ \indep\ \Bigl(Y_i(0),Y_i(1)\Bigr)\ \Bigl|\ \ow_{g(i)},
\end{equation} 
where $\ow_g$ is the share of treated units in group $g$. This unconfoundedness condition allows us to combine groups with the same value for $\ow_g$. This does not immediately have any empirical content. To see this, consider  the special case with two units per group. This implies  there are three values for $\ow_g$, namely $0$, $1/2$, and $1$.
If  $\ow_g=0$ or $\ow_g=1$, the combined set of groups  has only control units or only treated units, so there is no basis to estimate treatment effects. If we look at a set of groups with $\ow_g=1/2$, comparing treated and control units gives us an average of within-group comparisons of treated and control units based on \ref{eq:sim_exp_unc_clusters_w}. However, within-group comparisons were validated already by (\ref{eq:sim_exp_unc_clusters_old}).

Continuing with this two-units-per-group example, let $M_s$ denote the number of groups in the sample with $\ow_g=s$. Without covariates the fixed effect estimator for $\tau$ in (\ref{een}), which we denote by $\htauf$, reduces to averaging the difference between the treated unit and the control unit over the $M_{1/2}$ groups with exactly one treated and one control unit:
\[ \htauf=\frac{1}{M_{1/2}}\sum_{g:\ow_g=1/2} \hat\tau_g,\hskip1cm {\rm 
where }\ \ 
\hat\tau_g=\sum_{i:g(i) = g}W_i Y_i-\sum_{i:g(i) = g}(1-W_i) Y_i.\]
This fixed effect estimator $\htauf$ does not depend on outcomes for groups with either only treated units $\ow_g=1$, or groups with only control units, $\ow_g=0$.
For groups with $\ow_g=1/2$, $\hat\tau_g$ is the difference between a treated and a control outcome from the same group, so it is a natural, and in fact the only natural, estimator for the average effect within that group given that there are exactly two units from that group in the sample.

We can now characterize what $\htauf$ is estimating in settings with heterogeneous treatment effects under group unconfoundedness. Let $\tau=\mathbb{E}[Y_i(1)-Y_i(0)]$ denote the population average treatment effect and define for $s\in\{0,1/2,1\}$ the weighted average effect
\begin{equation}\label{taus} 
\tau(s)=\mathbb{E}[Y_i(1)-Y_i(0)|\ow_{g(i)}=s].
\end{equation}
This is a critical, though somewhat unusual, object. $\tau(s)$ is a conditional average treatment effect, but the conditioning depends on the sampling scheme and the assignment mechanism. To illustrate the unusual nature of this object,
suppose that for the same population we sampled three units per group instead of two. That would change the values of $s$ for which $\tau(s)$ is defined, and it would potentially change the value of $\tau(0)$ and $\tau(1)$ which are defined under both sampling schemes.

Now consider the overall average effect of the treatment.
This can be expressed in terms of the $\tau(s)$ as
\[ \tau=\pr(\ow_{g(i)}=0)
\tau(0)+
\pr(\ow_{g(i)}=1/2)
\tau(1/2)+
\pr(\ow_{g(i)}=1)
\tau(1).
\]
Under group unconfoundedness (\ref{eq:sim_exp_unc_clusters_w}), $\htauf$ is unbiased for  $\tau(1/2)$. If, as is likely in the heterogenous treatment effect case,  $\tau(s)$ varies by $s$, $\tau(1/2)$ would differ from $\tau$, and thus that $\htauf$ estimates something different from $\tau$. 

To further illustrate this point, suppose there is a group-level variable $\ou_g$ that is related to the propensity score and the outcomes in the following way:
\[
 \mathbb{E}[Y_i(1)-Y_i(0)|\oui]=\oui^2\hskip1cm
 {\rm and}\  \ 
 \pr(W_i=1|L_{g(i)}=L_g)=\ou_g.\]
 Furthermore, suppose $\ou_g$ has the following distribution over the population of groups,
\[
\pr(\ou_g=u)=1/3, \ u\in\{0,1/2,1\}.\]
In this setup, the overall average treatment effect is $\tau=\mathbb{E}[Y_i(1)-Y_i(0)]=E[\oui^2]=5/12$.
However, $\pr(\oui=1/2|\ow_{g(i)}=1/2)=1,$  so that $\tau(1/2)=1/4$. Conditioning on groups with one treated and one control observation shifts the distribution of $\ou_g$ from its marginal  distribution to a distribution that puts more weight on values of $\ou_g$ that make the observation of both a treated and control unit likely. It is easy to see that the interpretation of the fixed effect estimator (\ref{een}) changes  if we change the sampling scheme,  {\it e.g.,} if we sample $N_g=3$ units for each group, or if the assignment mechanism changes.

\subsection{Covariates}\label{sec:simple_cov}

Next, we consider a setting with a single binary covariate $X_i \in \{0,1\}$. 
 Define for each group $g$ and each covariate value $x$ the average treatment effect:
\begin{equation*}
    \tau_l(x)\equiv \mathbb{E}[Y_i(1) - Y_{i}(0)|X_i = x, L_{g(i)} = l].
\end{equation*}
This quantity is well-defined for all $(x,l)$ such that $x$ is in the support of $X_i$ in group $L_g$.

A natural generalization of
the group unconfoundedness assumption in  (\ref{eq:sim_exp_unc_clusters_old}) is the  (conditional) group unconfoundedness assumption: 
\begin{equation}\label{eq:sim_exp_unc_clusters}
W_i\ \indep\ \Bigl(Y_i(0),Y_i(1)\Bigr)\ \Bigl|\ X_i,L_{g(i)}.
\end{equation}
Similarly to (\ref{eq:sim_exp_unc_clusters_w}), it implies a different unconfoundedness condition that does not require within-group comparisons:
\begin{equation}\label{eq:sim_exp_unc1}
W_i\ \indep\ \Bigl(Y_i(0),Y_i(1)\Bigr)\ \Bigl|\ X_i,\ow_{g(i)},\ox_{g(i)},\owx_{g(i)},
\end{equation}
where $\ow_{g(i)},\ox_{g(i)},\owx_{g(i)}$ are group-level averages of the corresponding variables. Condition (\ref{eq:sim_exp_unc1}) justifies combining groups that have identical empirical joint distributions of covariates and treatment indicators. 

Focusing again on the case with two units per group, we can further unpack condition \ref{eq:sim_exp_unc1}. In particular, (\ref{eq:sim_exp_unc1}) allows us to compare units for groups $g$ with variation in the treatment and no variation in the covariate. Formally, let $\uw_g$ be the pair of treatment values $(W_i,W_j)$ for the two units $i$ and $j$ in group $g$, and let $\ux_g$ be the pair of treatment values $(X_i,X_j)$ for the two units $i$ and $j$ in group $g$. We can then define the set 
\[\mathbb{B} \equiv \Bigl\{(\uw_g,\ux_g)\Bigl|\,
\uw_g\in\{(0,1),(1,0)\},\ux_g\in\{(0,0),(1,1)\}
\Bigr\},\]
and consider the average treatment effect 
\begin{equation*}
    \tau_{\mathbb{B}} \equiv 
    \mathbb{E}[\tau_{L_{g(i)}}(X_i)|(\uw_{g(i)},\ux_{g(i)})\in \mathbb{B}].
\end{equation*}
Similarly to $\tau(1/2)$ from the previous section, $\tau_\mathbb{B}$ is generally different from the average treatment effect $\tau$. It  depends both on the sampling scheme and the assignment process. 

The fixed effect specification (\ref{een}), in general, does not consistently estimate $\tau_{\mathbb{B}}$ or any other weighted average effect. To see this, observe that the OLS estimator $\htauf$ for the equation (\ref{een}) is equal to the estimated coefficient  of a  different regression function, namely
\begin{equation}\label{twee} 
Y_i=\alpha+W_i\tau+X_i^\top\beta+\ow_{g(i)}\delta+\ox_{g(i)}^\top\gamma+\eta_i.
\end{equation}
 The numerical  equivalence, first shown in \citet*{mundlak1978pooling}, follows from repeated applications of textbook omitted variable bias formulas.  In general, least squares estimation of regression (\ref{twee}) combines data from all groups, including those with $\ow_g \in \{0,1\}$, making $\htauf$ inconsistent for the average effect.

The alternative version of the fixed effect regression in representation (\ref{twee}) is important because it suggests weakening the conditional group unconfoundedness condition from (\ref{eq:sim_exp_unc1}) by dropping 
$\owx_{g(i)}$, leaving the condition as
\begin{equation}\label{eq:sim_exp_unc}
W_i\ \indep\ \Bigl(Y_i(0),Y_i(1)\Bigr)\ \Bigl|\ X_i,\ox_{g(i)},\ow_{g(i)}.
\end{equation}
This condition encapsulates a key feature of the fixed effect specification that $\ow_{g(i)}$ and $\ox_{g(i)}$ fully capture all relevant differences between groups and that $\owx_{g(i)}$ is not required for this. Restriction (\ref{eq:sim_exp_unc}) implies that we can compare treated and control units as long as they have the same values of $X_i$ and $\osi\equiv (\ow_{g(i)},\ox_{g(i)})$, even though they may belong to groups that have different values of $\owx_{g}$. With two units per group, by construction $\osi$ takes $9$ possible values, but only $3$ of them correspond to groups with variation in treatment status: $\osi \in \left\{\left(\frac12,0\right),\left(\frac12, \frac12\right),\left(\frac12,1\right) \right\}$. For other values of $\osi$ we either have only control or only treated units, and thus condition  (\ref{eq:sim_exp_unc}) is not useful (and neither is (\ref{eq:sim_exp_unc_clusters})).

The three remaining values of $\osi$ lead to conceptually different comparisons. Values $\osi \in \left\{\left(\frac12,0\right),\left(\frac12,1\right) \right\}$ correspond to groups where both units are identical in terms of $X_i$ but vary in their treatment.  For these groups (\ref{eq:sim_exp_unc}) justifies within-group comparisons which were already validated by the more restrictive  assumption
(\ref{eq:sim_exp_unc_clusters}).
This leaves one remaining value, $\osi =\left(\frac12, \frac12\right)$.
If $\osi =\left(\frac12, \frac12\right)$ then by definition the two units in the group have different values of $X_i$, and within-group comparisons have no causal interpretation: the pair of units in the same group are either $(W_i,X_i)=(0,0)$ and $(W_j,X_j)=(1,1)$, or the pair are  $(W_i,X_i)=(1,0)$ and $(W_j,X_j)=(0,1)$. In both cases, the pairs cannot be compared directly. However, (\ref{eq:sim_exp_unc})  justifies combining groups where $W_i = X_i$ with those where $W_i = 1-X_i$. It is useful to consider this comparison in more detail. Suppose that one group has units with $(W_i,X_i)\in\{(0,0),(1,1)\}$, and another  group has units with $(W_i,X_i)\in\{(0,1),(1,0)\}$. In neither of the groups we can directly compare treated and control units. But under (\ref{eq:sim_exp_unc}) we can combine the two groups and then compare a treated unit $i$ with $(W_i,X_i)=(1,1)$ in one group with a control unit $j$ with $(W_j,X_j)=(0,1)$ in another group because they satisfy the two conditions that $(i)$ they have the same value of the covariates ($X_i=X_j=1)$  and $(ii)$ they belong to groups with the same value for $\ow_g$ and $\ox_g$ (namely $\ow_{g(i)}=\ow_{g(j)}=1/2$ and $\ox_{g(i)}=\ox_{g(j)}=1/2$).

Define the conditional expectation of $\tau_{l}(X_i)$ where the expectation is over all units with $X_i=x$ and over all groups with $\os_{g(i)}=s$:
\begin{equation*}
\tau(s,x) \equiv\mathbb{E}[\tau_{L_{g(i)}}(X_i)|\os_{g(i)}=s,X_i=x]=
 \mathbb{E}[Y_{i}(1) - Y_{i}(0)|\os_{g(i)}= s, X_i = x].\\
\end{equation*}
Similar to $\tau(s)$ and $\tau_{\mathbb{B}}$, this object is defined in terms of the sampling scheme and is not a fixed population characteristic. The discussion above shows that we can consistently estimate $\tau(s,x)$ for certain values of $(s,x)$. Importantly, the linear specification (\ref{een}) relies on comparisons that are not validated by  (\ref{eq:sim_exp_unc}) and thus does not estimate a meaningful average effect in this case.

\subsection{Discussion}

Examples from the previous sections show that we can reduce group unconfoundedness, as in  (\ref{eq:sim_exp_unc_clusters_old}) or (\ref{eq:sim_exp_unc_clusters}), to conditioning on some balancing score $\osi$ and covariates, as in (\ref{eq:sim_exp_unc_clusters_w}) or (\ref{eq:sim_exp_unc1}).  Despite the lack of immediate empirical content, this reduction brings two critical  insights.
Both insights are about operationalizing the notion that some groups are more similar than others, which is absent in
(\ref{eq:sim_exp_unc_clusters_old}) or (\ref{eq:sim_exp_unc_clusters}).

First, conditioning on an unordered discrete characteristic -- group indicator -- does not provide a distance metric to make the case that  group $g$ is closer to group $g'$ or to group $g''$. In contrast, (\ref{eq:sim_exp_unc_clusters_w}) (and similarly (\ref{eq:sim_exp_unc1})) allows for the establishment of such a metric: a group $g$ with $\ow_g=0.10$ is closer to group $g'$ with $\ow_{g'}=0.11$ than it is to group $g''$ with $\ow_{g''}=0.50$ (see \citet*{johannemann2019sufficient} for an application of this idea). The second  insight is even more central to the current paper. It allows us to find a middle ground between fully conditioning on the entire set of group indicators and not conditioning on these indicators at all. For example, going from (\ref{eq:sim_exp_unc1}) to (\ref{eq:sim_exp_unc}) justifies some, but not all between-group comparisons. Depending on the application, one can go further and assume a stronger condition
   \begin{equation*}
W_i\ \indep\ \Bigl(Y_i(0),Y_i(1)\Bigr)\ \Bigl|\ X_i,\ow_{g(i)}. 
\end{equation*}
Alternatively, if we observe more units in each group, we can relax (\ref{eq:sim_exp_unc}) to improve the comparability of groups without reducing it fully to (\ref{eq:sim_exp_unc_clusters}).
We discuss in Section \ref{section:identification} whether and when it is enough to adjust for some subset of these averages.

Another issue illustrated in the previous sections is the relationship between the unconfoundedness conditions and the fixed effects specification (\ref{een}). In general, this regression does not estimate the average treatment effect, and in the presence of covariates, it might fail to estimate even a weighted average effect. However, under the restriction (\ref{eq:sim_exp_unc}) we can go beyond the linear specifications (\ref{een}) and consider different estimation methods. First, we can use more flexible specifications for the regression function as a function of the control variables $(X_i,\ow_{g(i)},\ox_{g(i)})$ and the treatment $W_i$:
\[ Y_i(w)=\mu(w,X_i,\ow_{g(i)},\ox_{g(i)})+\varepsilon_i(w),\]
with a parametric or non-parametric specification for $\mu(w,\cdot)$ that generalizes the linear additive form in (\ref{een}).
These specifications may include higher-order moments of the control variables, interactions with the treatment, or transformations of the linear index.
Given estimates of $\mu(w,\cdot)$ we can average the difference $\hat \mu(1,X_i,\ow_{g(i)},\ox_{g(i)})-\hat \mu(0,X_i,\ow_{g(i)},\ox_{g(i)})$ over the sample to estimate the average treatment effect.

Second, we can model the propensity score as a function of $(X_i,\ow_{g(i)},\ox_{g(i)})$:
\[ e(x,\overline w,\overline x)\equiv {\rm pr}(W_i=1|X_i=x,\ow_{g(i)}=\overline w,\ox_{g(i)}=\overline x).\]
Once we have estimates of the propensity score, we can use them to develop inverse propensity score weighting estimators (see \cite{arpino2011specification} for an application of this idea to grouped data). In particular, an attractive approach would be to use the inverse propensity score weighting or other forms of balancing in combination with a credible specification of the regression function. For example, one could use a weighted version of (\ref{twee}) to make the results more robust to misspecification of the regression function. Such doubly-robust and balancing methods have been argued in the recent causal inference literature on unconfoundedness to be more robust to misspecification than estimators that rely solely on specifying the conditional mean of the outcome given conditioning variables and treatments ({\it e.g., } \citet*{robins1995semiparametric,  zubizarreta2015stable,
chernozhukov2017double, athey2018approximate}).

Following this discussion, we recommend using three modifications of (\ref{een}). First, choose what group characteristics one wishes to include in the analysis beyond the averages of the treatment and the covariates. Second, specify a flexible conditional mean function that allows for dependence on the selected group characteristics. Third,  estimate the propensity score as a function of individuals and group characteristics and combine that with the conditional mean specification to obtain a more robust estimator for the average treatment effect.

\subsection{Related Literature}

Our approach is related to classic panel data literature, \textit{e.g.}, \cite{mundlak1978pooling, chamberlain1984panel,chamberlain2010binary,altonji2005cross}. Similar to these papers, we develop a conditional approach, where we effectively control for heterogeneity across groups using group-level balancing scores. Our key contribution to this literature is to show how these statistics arise naturally from the design model and connect this approach to modern estimation algorithms developed for cross-sectional data.

Our results are also connected to random effects models studied in the more recent panel data literature (see \citet{arellano2011nonlinear} for a recent review). For example, \citet{arellano2016nonlinear} explicitly model the distribution of the outcomes and assignments given the individual unobservable characteristics and achieve identification using results from nonparametric nonlinear deconvolution literature (\citet{hu2008instrumental}). The key conceptual and practical difference between our approach and theirs is that we do not use the outcome data to deal with unobserved heterogeneity and rely only on the distribution of treatments and covariates for identification.  In this sense, we follow a tradition used in causal inference literature (\citet*{rosenbaum_book, imbens2015causal, abadie2020sampling}).

Our approach is connected to grouped fixed effects strategies (\citet*{hahn2010panel, bonhomme2015grouped,bonhomme2022discretizing}). Similar to this literature, we propose pooling data across groups with similar aggregate statistics. There are two main differences, though. As discussed above, we do not use the outcome data, instead focusing on the assignment process. Also, our identification and inference results are valid for groups of small size, whereas statistical results in the other literature are based on approximations with large groups.  

Finally, our results are related to recent literature on regression models with fixed effects (\textit{e.g.},\citet{de2020two,goodman2021difference,wooldridge2021two}). Similar to these papers, we discuss the interpretation of the least-squares estimators under heterogeneity in treatment effects and develop an estimator that is robust to the presence of such heterogeneity. Our focus, however, is quite different: instead of restricting the conditional outcome distributions, we focus on the assignment model and achieve robustness by using it to construct the estimator. 

\section{Identification, Estimation, and Inference}
\label{section:identification}

In this section we present the main results.

\subsection{Setup}

We start with the sampling assumption that describes the relationship between groups and unit-level data:
\begin{assumption}\label{as:random_sampling}{\sc(Grouped sampling)}
We randomly sample $\numcl$ groups out of a super-population of groups with $g\in\{1,\dots, \numcl\}$ being a generic group; we then randomly sample $N_g$ units from each group $g$, so that $N=\sum_{g=1}^\numcl N_g$ is the total sample size. $N_g\ge2$ and is the same for all $g=1,\ldots,\numcl.$
\end{assumption}
The assumption that the group size is identical for all groups is for expositional reasons only. In the Appendix, we generalize our results  to allow for heterogeneity in the group sizes. 
The key complication is that it requires us to add the group size to the set of conditioning variables that make up the balancing score.
We assume that for a generic group $g$ we observe its label $L_g$. For each unit $i$ define $g(i)\in\{1,\ldots,M\}$ -- the index of the group this unit belongs to. For each unit $i$ we observe the outcomes $Y_i$, binary treatment indicators $W_i\in\{0,1\}$, and covariates $X_i \in \mathbb{X}$. Assumption \ref{as:random_sampling} allows us to view the data as $\numcl$ independent and identically distributed copies of random elements $\left(L_g,\{(W_i,X_i,Y_i)\}_{i: g(i)= g}\right)$, with the additional independence among vectors $\{(W_i,X_i,Y_i)\}_{i: g(i)= g}$.

We interpret the observed outcomes using Neyman-Rubin potential outcomes model (\cite{neyman1923,rubin1977assignment,imbens2015causal}):
\begin{equation}\label{eq:sutva}
Y_i = Y_i(1)W_i + Y_i(0)(1-W_i),
\end{equation}
which implies the absence of spillovers across units and groups. We impose restrictions on the treatment assignment process:
\begin{assumption}\label{as:unconf_cl_un}{\sc (Group-level Unconfoundedness)}\\
\begin{equation*}
W_i\ \indep\ \Bigl(Y_i(0),Y_i(1)\Bigr)  \ \Bigl|\ X_i,L_{g(i)}.
\end{equation*}
\end{assumption}
\noindent This assumption restricts the distribution of potential outcomes and treatment indicators within a particular group. It implies that we can give a causal interpretation to comparisons of units with the same characteristics within the group.

\subsection{Identification Results}

For the first identification result we need to introduce additional notation. For each group $g$ define $\mathbb{P}_g$  to be the empirical measure of $(W_i,X_i)$ in  group $g$. Formally, for any set $A\subseteq \mathbb{X}\times \{0,1\}$, $\mathbb{P}_g(A)$ is defined through the following relationship:
\begin{equation*}
    \mathbb{P}_g(A) =\frac{1}{N_g} \sum_{i:g(i)= g} \mathbf{1}_{(W_i, X_i) \in A}.
\end{equation*}
We use this construction for our first identification result:
\begin{prop}\label{prop:unc_in_general}{\sc(Unconfoundedness with empirical measure)}
Suppose  Assumptions  \ref{as:random_sampling} and \ref{as:unconf_cl_un} hold. Then:
\begin{equation*}
W_i \indep\   \Bigl(Y_i(0),Y_i(1)\Bigr)  \ \Bigl|\ X_i,\mathbb{P}_{g(i)}
\end{equation*}
\end{prop}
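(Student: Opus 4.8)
The plan is to bring in the unobserved cluster-level variable $\overline{U}_{C_i}$ as a bridge: first establish the conditional independence after conditioning on $\overline{U}_{C_i}$, where Assumptions~\ref{assumption:unconf_cl_un} and~\ref{assumption:random_effects} apply directly, and then integrate $\overline{U}_{C_i}$ out, exploiting the fact that once we condition on $N_{C_i}$ and $\mathbb{P}_{C_i}$ the additional information in $W_i$ is uninformative about $\overline{U}_{C_i}$. The starting point is a representation of the within-cluster sample: conditional on $\overline{U}_{C_i}=u$ and $N_{C_i}=n$, the tuples $\{(X_j,W_j,Y_j(0),Y_j(1)):C_j=C_i\}$ are $n$ i.i.d.\ draws from a cluster-type distribution $F_u$. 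This is what Assumption~\ref{assumption:random_sampling} (random sampling of units within the sampled clusters) together with Assumption~\ref{assumption:random_effects} (exchangeability, so that the cluster label contributes nothing beyond $\overline{U}_{C_i}$, combined with de Finetti for the infinite within-cluster populations, as noted after Assumption~\ref{assumption:random_effects}) delivers. Moreover, the displayed consequence of Assumptions~\ref{assumption:unconf_cl_un}--\ref{assumption:random_effects} stated just after Assumption~\ref{assumption:random_effects}, restricted to a single cluster type, says that under $F_u$ one has $W\indep(Y(0),Y(1))\mid X$.

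Second, I would identify the conditional law of $(Y_i(0),Y_i(1))$ given $(W_i,X_i,N_{C_i},\mathbb{P}_{C_i},\overline{U}_{C_i})$. By the i.i.d.\ representation, conditional on $\overline{U}_{C_i}=u$ the tuple $(Y_i(0),Y_i(1))$ is independent of the data of the remaining units in the cluster given $(X_i,W_i)$; since $N_{C_i}$ and $\mathbb{P}_{C_i}$ are functions of $(X_i,W_i)$ together with those remaining units' $(X_j,W_j)$, it follows that, conditional on $\overline{U}_{C_i}=u$, the law of $(Y_i(0),Y_i(1))$ given $(W_i,X_i,N_{C_i},\mathbb{P}_{C_i})$ equals $F_u(\,\cdot\mid X=X_i,\,W=W_i)=F_u(\,\cdot\mid X=X_i)$, the last equality being the within-$F_u$ unconfoundedness from the first step. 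In particular this conditional law depends on $X_i$ and $\overline{U}_{C_i}$ but not on $W_i$.

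Third, I would average over $\overline{U}_{C_i}$:
\[
\Pr\bigl(Y_i(0)\in A,\,Y_i(1)\in B\mid W_i,X_i,N_{C_i},\mathbb{P}_{C_i}\bigr)=\int F_u\bigl(A\times B\mid X=X_i\bigr)\,\Pr\bigl(\overline{U}_{C_i}\in du\mid W_i,X_i,N_{C_i},\mathbb{P}_{C_i}\bigr).
\]
The integrand no longer involves $W_i$, so it remains only to check that the mixing measure is free of $W_i$, i.e.\ that $(X_i,W_i)\indep\overline{U}_{C_i}\mid N_{C_i},\mathbb{P}_{C_i}$. This is the combinatorial core of the proof: conditional on $\overline{U}_{C_i}=u$, $N_{C_i}=n$ and $\mathbb{P}_{C_i}=P$, exchangeability of the $n$ within-cluster draws makes unit $i$ a uniformly random one among them, so $(X_i,W_i)$ is simply the label of a uniformly chosen atom of the empirical measure $P$, and $\Pr((X_i,W_i)=(x,w)\mid \overline{U}_{C_i}=u,N_{C_i}=n,\mathbb{P}_{C_i}=P)$ equals the mass $P$ puts on $(x,w)$, with no dependence on $u$. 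A Bayes-rule inversion then gives $\overline{U}_{C_i}\indep(X_i,W_i)\mid N_{C_i},\mathbb{P}_{C_i}$, and combining this with the second step shows that the left-hand side above does not depend on $W_i$, which is the assertion $W_i\indep(Y_i(0),Y_i(1))\mid X_i,N_{C_i},\mathbb{P}_{C_i}$.

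The main obstacle is this last step. A priori, observing unit $i$'s own treatment $W_i$ could shift the posterior over the latent cluster type $\overline{U}_{C_i}$ and thereby reweight the family $\{F_u(\,\cdot\mid X_i)\}_u$ in a $W_i$-dependent way; the resolution is that $\mathbb{P}_{C_i}$ already records the entire configuration of covariate--treatment pairs in the cluster, so disclosing which atom happens to be unit $i$'s carries no further information about $\overline{U}_{C_i}$. The remaining work is bookkeeping: making the i.i.d.\ representation precise (and, for continuous $X_i$, reading $\mathbb{P}_c$ as the empirical measure $\tfrac{1}{N_c}\sum_{j:C_j=c}\delta_{(X_j,W_j)}$ as in Appendix~\ref{app:ident}, which leaves the uniform-atom argument untouched) and handling the conditioning on probability-zero events in the usual measure-theoretic way.
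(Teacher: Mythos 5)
Your proposal is correct and follows essentially the same route as the paper: your step 2 is the content of the paper's Lemma~\ref{lemma:stat_exc} (statistical exclusion) combined with within-cluster unconfoundedness given $\overline{U}_{C_i}$, your uniform-atom/Bayes step is exactly the paper's Lemma~\ref{lemma:stat_suf} (statistical sufficiency, proved there by iterated expectations and within-cluster exchangeability), and both arguments then integrate out $\overline{U}_{C_i}$. The only difference is presentational: you justify sufficiency via the explicit i.i.d.\ representation and a uniformly-chosen-atom argument, whereas the paper states the same fact as $U_i\indep\{(W_j,X_j)\}_{j=1}^N\mid\mathbb{P}_{C_i}$.
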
 
\noindent For proofs of all results in this section see Appendix \ref{app:ident}. Proposition \ref{prop:unc_in_general} states that as long as units have the same characteristics, and they come from groups identical in terms of $\mathbb{P}_g$ they are comparable. To relate this result to the discussions in the previous section, assume that there are no covariates. In this case, $\mathbb{P}_g$ reduces to $\ow_g$ and we get condition (\ref{eq:sim_exp_unc_clusters_w}). If we introduce a single binary covariate -- we get condition (\ref{eq:sim_exp_unc1}). This result is a manifestation of a familiar  balancing property of the propensity score  (\textit{e.g.,}  \citet{rosenbaum1983central}). The subpopulation with the same value for  $(X_i,\mathbb{P}_g)$ are balanced: the distribution of treatments is the same for all units within such subpopulations. 

However, as discussed in the previous section, the empirical relevance of this result is limited. To make it operational, we need to substitute the empirical distribution of $(W_i, X_i)$ with a low-dimensional object. To do this, we put structure on the joint distribution of $(W_i,X_i)$ and its relation to $L_g$ with an exponential family representation (\citet{cox1979theoretical}).
\begin{assumption}\label{as:exp_family}{\sc (Exponential family)}
Conditional on $L_g=l$ distribution of  $(W_i,X_i)$ belongs to an exponential family with a known sufficient statistic, {\it i.e.,} its conditional density 
has the following form:
\begin{equation}\label{eq:suf_stat}
f_{W_i,X_i|L_g}(w,x|l)= h(w,x) \exp\Bigl\{ \eta(l)^\top S(w,x)+\eta_0(l))\Bigr\},
\end{equation}
with potentially unknown carrier $h(\cdot)$.
\end{assumption}
Under relatively weak conditions distributions can be well approximated by distributions in exponential families (\citet{barron1991approximation}), so
as long as  we do not restrict the dimension of $S(\cdot)$ this  assumption is fairly weak.  For example, note that if the distribution of $X_i$ is discrete, one can immediately write the joint distribution of $(W_i,X_i)$ within each group as an exponential family distribution with a group-specific parameter. In addition, we can approximate any distribution arbitrarily well by a discrete distribution.  Of course, it is only when the dimension of $S(\cdot)$ is small relative to the number of observations that the assumption will be seen to be useful.

Assumption \ref{as:exp_family} places no restrictions on the distribution of the potential outcomes and thus does not restrict heterogeneity in treatment effects. At the same time, it restricts the joint distribution of $(W_i, X_i)$, not just the conditional distribution of $W_i$. This is without loss of generality: if $X_i$ is discrete and function $S(W_i, X_i)$ includes indicators for all potential values of $X_i$, then (\ref{eq:suf_stat}) places no restrictions on the marginal distribution of $X_i$ given $L_{g(i)}$. For continuous $X_i$ the same argument holds using an infinite set of indicators.

With Assumption (\ref{as:exp_family}) we can define an unobserved group-level variable that captures all the information about the distribution of $(W_i,X_i)$:
\begin{equation*}
    \ou_g\equiv \eta(L_g).
\end{equation*}
We use the bar notation to indicate that this is a group-level variable.
Our next lemma shows that we can substitute $L_g$ with $\ou_g$ for the purpose of identification.
\begin{lemma}\label{lemma:re}
Suppose Assumptions \ref{as:random_sampling}- \ref{as:exp_family} hold. Then: $(i)$ 
\begin{equation*}
\Bigl\{(W_i,X_i)\Bigr\}_{i: g(i)= g} \indep\ L_g\ \Bigr|\ \ou_g
\end{equation*}
and $(ii)$:
\begin{equation*}
W_i\ \indep\ \Bigl(Y_i(0),Y_i(1)\Bigr)\ \Bigl|\ X_i,\oui.
\end{equation*}
\end{lemma}
\noindent This result demonstrates that our problem can be converted into a traditional random effects setup (\textit{e.g.}, \citet*{chamberlain1984panel}): the group labels $L_g$ are not important, only the group-level characteristics $\ou_g$, which are functions of these labels, are.
We do not know the group-specific parameters $\overline{U}_g$, nor can we estimate them consistently in settings with a small number of observations per group because of the incidental parameter problem (\citet*{neyman1948consistent, lancaster2000incidental}).
However, and this is a key insight in the paper, it is {\it not} necessary to 
estimate them consistently because the sufficient statistic in the exponential family representation
in \ref{as:exp_family},
$\os_g\equiv\sum_{i:g(i) = g}S(W_i,X_i)/N_g$, plays the role of a balancing statistic. 

Using Assumption \ref{as:exp_family} we state our main identification result:
\begin{theorem}\label{th:unc_with_exp_family}
Suppose Assumptions \ref{as:random_sampling}-\ref{as:exp_family} hold. Then:
\begin{equation*}
W_i \indep\   \Bigl(Y_i(0),Y_i(1)\Bigr) \ \Bigl|\ X_i, \osi.
\end{equation*}
\end{theorem}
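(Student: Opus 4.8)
The plan is to combine the conclusion of the Lemma immediately preceding the theorem---namely $W_i \indep C_i \mid X_i,\osi$---with Assumptions \ref{assumption:unconf_cl_un} and \ref{assumption:random_effects}, and then apply a standard conditional-independence calculus (contraction/weak union properties) to replace the conditioning on $C_i$ by conditioning on the coarser statistic $\osi$. The key point is that $\osi$ is a function of $C_i$ (it is $(N_{C_i},\sum_{i:C_i=c}S(X_i,W_i)/N_c)$ within the sampled clusters), so conditioning on $C_i$ is finer than conditioning on $(X_i,\osi)$, and the Lemma tells us that within each level of $(X_i,\osi)$ the conditional distribution of $W_i$ does not depend on which particular cluster $C_i$ we are in.

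The steps I would carry out: First, recall that Assumptions \ref{assumption:unconf_cl_un} and \ref{assumption:random_effects} already yield the displayed intermediate result $W_i \indep (Y_i(0),Y_i(1)) \mid X_i,\oui$; since $\oui$ is measurable with respect to $C_i$, a routine application of weak union then gives $W_i \indep (Y_i(0),Y_i(1)) \mid X_i,C_i$ (this is essentially Assumption \ref{assumption:unconf_cl_un} itself). Second, observe that, under Assumption \ref{assumption:random_effects}, the joint law of $(Y_i(0),Y_i(1),X_i)$ given $C_i$ depends on $C_i$ only through $\oui$, and by Assumption \ref{assumption:exp_family} together with the sufficiency of $S(\cdot)$, the conditional law of $W_i$ given $(X_i,C_i)$ depends on $C_i$ only through $\osi$; the latter is exactly what the Lemma encodes as $W_i \indep C_i \mid X_i,\osi$. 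Third, I would write, for fixed values $x$ and $\bar s$ of $X_i$ and $\osi$,
\[
\pr\bigl(W_i=1 \mid Y_i(0),Y_i(1),X_i=x,\osi=\bar s\bigr)
= \mme\bigl[\pr(W_i=1 \mid Y_i(0),Y_i(1),X_i,C_i) \,\big|\, Y_i(0),Y_i(1),X_i=x,\osi=\bar s\bigr],
\]
use Step 1 to drop $(Y_i(0),Y_i(1))$ from the inner probability, leaving $\pr(W_i=1\mid X_i=x,C_i)$, and then use the Lemma (Step 2) to conclude that this inner quantity, when averaged over the conditional law of $C_i$ given $(Y_i(0),Y_i(1),X_i=x,\osi=\bar s)$, equals $\pr(W_i=1\mid X_i=x,\osi=\bar s)$. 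Since a binary $W_i$ is determined in distribution by this propensity, that establishes the claimed independence. For general (non-binary-indexed) potential outcomes the same argument runs verbatim with the event $\{W_i=1\}$ replaced by an arbitrary measurable event for $W_i$, or one simply notes $W_i$ is binary here.

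The main obstacle is Step 2, i.e.\ making precise the claim that the conditional law of $C_i$ given $(Y_i(0),Y_i(1),X_i,\osi)$ interacts correctly with the two sources of conditional independence---one must check that conditioning additionally on the potential outcomes does not reintroduce dependence of $W_i$ on $C_i$ beyond $\osi$. This follows because, by Assumption \ref{assumption:random_effects}, $(Y_i(0),Y_i(1),X_i,W_i)\indep C_i\mid\oui$, so $C_i$ contributes no information about $W_i$ beyond $\oui$ even after conditioning on $(Y_i(0),Y_i(1),X_i)$; combining this with the sufficiency reduction of $\oui$-dependence to $\osi$-dependence (the content of Assumption \ref{assumption:exp_family} and the Lemma) closes the gap. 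The bookkeeping with the empirical-measure / exponential-family sufficiency in the presence of the random sampling of clusters (Assumption \ref{assumption:random_sampling}) is the only genuinely delicate part, and I would lean on the Lemma to absorb it rather than re-deriving it here.
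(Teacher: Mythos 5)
There is a genuine gap, and it sits exactly where the paper spends its effort. Your argument rests on the premise that ``conditioning on $C_i$ is finer than conditioning on $(X_i,\osi)$,'' but $\osi$ is not $\sigma(C_i)$-measurable: it is the realized within-cluster sample average $(N_{C_i},\tfrac{1}{N_{C_i}}\sum_{j:C_j=C_i}S(X_j,W_j))$, which depends on the other sampled units' $(X_j,W_j)$ and, crucially, on unit $i$'s own $W_i$. Consequently your tower-property display is not valid as written: the inner conditioning $\sigma(Y_i(0),Y_i(1),X_i,C_i)$ does not contain the outer $\sigma(Y_i(0),Y_i(1),X_i,\osi)$, so the law of iterated expectations forces the inner probability to be $\pr(W_i=1\mid Y_i(0),Y_i(1),X_i,C_i,\osi)$. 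At that point the step you need --- dropping $(Y_i(0),Y_i(1))$ from this inner probability --- is not Assumption \ref{assumption:unconf_cl_un} nor a ``routine weak union'' of it, because the extra conditioning variable $\osi$ is a function of $W_i$ and of other units' treatments and covariates. What is required is precisely the statistical-exclusion result the paper proves (Lemma A1 and its Corollary A1): $(Y_i(1),Y_i(0))\ \indep\ \{(\os_{C_j},X_j,W_j)\}_{j=1}^N\ \bigl|\ X_i,W_i,U_i$, which uses Assumptions \ref{assumption:random_sampling} and \ref{assumption:random_effects} applied to the whole sample, not just to unit $i$'s variables. The main-text Lemma you lean on ($W_i\indep C_i\mid X_i,\osi$) only delivers the sufficiency half of the argument (that the conditional law of $W_i$ depends on the cluster only through $\osi$); it cannot ``absorb'' the exclusion half, and your closing paragraph, which appeals again to $(Y_i(1),Y_i(0),X_i,W_i)\indep C_i\mid\oui$ for unit $i$ alone, does not address the dependence introduced through the other cluster members entering $\osi$.

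Once that missing lemma is supplied, your route can be made to work and is essentially the mirror image of the paper's: the paper shows the conditional law of $(Y_i(0),Y_i(1))$ given $(W_i,X_i,\osi)$ is free of $W_i$ by first conditioning on $U_i$, invoking exclusion (Corollary A1) and then sufficiency (Lemma A3, the exponential-family analogue of $U_i\indep\{W_j,X_j\}_{j=1}^N\mid\osi$), whereas you work with the propensity $\pr(W_i=1\mid Y_i(0),Y_i(1),X_i,\osi)$ and would use exclusion to remove the potential outcomes from the inner probability and the main-text Lemma to remove $C_i$. But as submitted, the proposal replaces the paper's two-lemma structure with only one of the two lemmas, and the omitted one is the non-trivial part.
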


\noindent Theorem \ref{th:unc_with_exp_family} is related to Proposition \ref{prop:unc_in_general}, but the key difference is that Theorem \ref{th:unc_with_exp_family} has empirical content as long as $\osi$ is less general than $\mathbb{P}_{g(i)}$.  Theorem \ref{th:unc_with_exp_family} reduces the high-dimensional object $\mathbb{P}_{g(i)}$ to an average $\osi$ and allows us to combine multiple groups. This makes cross-sectional methods developed in the last decades feasible for applications with grouped data. In practice, the importance of this point depends on the number of units per group and the dimension of $S(\cdot)$. The higher the dimension of $S(\cdot)$, the closer we are to controlling for $\mathbb{P}_{g(i)}$ and thus relying solely on comparisons within groups. 

Theorem \ref{th:unc_with_exp_family} becomes useful once we fix a particular low-dimensional $S(W_i,X_i)$. We assume that $S(\cdot)$ is known throughout most of the paper. This makes Assumption \ref{as:exp_family} restrictive and we can no longer argue that it holds for an arbitrary conditional distribution of $(W_i,X_i)$. As we illustrate with the examples in the previous section, this is needed if we want to go beyond within-groups analysis without imposing additional assumptions on the potential outcomes. In the current empirical practice, researchers rely on (\ref{een}) or, equivalently, (\ref{twee}), thus implicitly using $S(W_i,X_i) = (W_i,X_i)$, which can be too coarse for some applications. In particular, this choice assumes that covariates do not interact with group labels in the assignment model. We informally discuss selecting the set of balancing scores in Section \ref{sec:discussion}.

Define propensity score:
\begin{equation*}
e(x,s) \equiv{\rm pr}(W_i =1|X_i=x, \osi=s).
\end{equation*}
Depending on the value of the balancing score, the propensity score  $e(x,s)$ might be equal to zero or one for certain values of $s$. For example, this happens if $\ow_g$ is part of the balancing score. As a result, we cannot expect the overlap assumption (\citet{imbens2015causal}) to hold. Instead, we are making the following assumption:
\begin{assumption}\label{as:overlap}{\sc(Known overlap)}
For a known set $\mathbb{A}$ with ${\rm pr}\left((X_i,\osi) \in \mathbb{A}\right)>0$ there exists $\eta >0$, such that $(i)$ for any $(x,s) \in  \mathbb{A}$ we have $\eta <e(x,s)<1-\eta$, and $(ii)$  for $(x,s) \not\in  \mathbb{A}$ $e(x,s) \in \{0,1\}$.
\end{assumption}
This assumption has two parts: the first part restricts $e(x,s)$ to be non-degenerate on a certain set and only that set. This is necessary if we want to identify treatment effects without relying on functional form assumptions. The second part is different: we assume that the set is known to a researcher. This is a generalization of the standard overlap assumption, where we assume that the set $\mathbb{A}$ is equal to the support of the covariate space, see \citet*{crump2009dealing}. 

To provide additional intuition for Assumption \ref{as:overlap}, let us go back to the example from Section \ref{sec:simple_cov}, where $S(W_i,X_i) = (W_i,X_i)$ and we observe two units per cluster. In that case, the propensity score is equal to either zero or one for any $\osi \not\in \left\{\left(\frac12,0\right),\left(\frac12, \frac12\right),\left(\frac12,1\right) \right\}$. By construction $e\left(0, \left(\frac12,0\right)\right) = e\left(1, \left(\frac12,1\right)\right) = \frac12$, but the values of $e\left(0, \left(\frac12, \frac12\right)\right)$ and $e\left(1, \left(\frac12, \frac12\right)\right)$ are not fixed, except for the restriction $e\left(0, \left(\frac12, \frac12\right)\right) + e\left(1, \left(\frac12, \frac12\right)\right) = 1$. As a result, if we set $\mathbb{A}\equiv \left\{(x,s):\left(0,\left(\frac12,0\right)\right),\left(0,\left(\frac12, \frac12\right)\right),\left(1,\left(\frac12, \frac12\right)\right),\left(1,\left(\frac12,1\right)\right) \right\}$, then this amounts to assuming $\eta <e\left(0, \left(\frac12, \frac12\right)\right) < 1-\eta$. 

We can now define our target estimands. We focus on two of them; the first one is similar to the objects considered in the previous section:
\begin{equation}\label{eq:pop_est}
    \tau_{\mathbb{A}}\equiv \mathbb{E}[(Y_{i}(1) - Y_{i}(0))| (X_i, \osi) \in \mathbb{A} ].
\end{equation}
The second object is a sample group-weighted version of (\ref{eq:pop_est}):
\begin{equation*}
    \tilde\tau_{\mathbb{A}}\equiv \frac{\sum_{i=1}^N\mathbf{1}_{(X_i,\osi) \in \mathbb{A}}\mathbb{E}[Y_{i}(1) - Y_{i}(0))|X_i,\osi
    ]}{\sum_{i=1}^N\mathbf{1}_{(X_i,\osi) \in \mathbb{A}}}. 
\end{equation*}
Theorem (\ref{th:unc_with_exp_family}) together with Assumption (\ref{as:overlap}) implies that $\tau_{\mathbb{A}}$ and $\tilde\tau_{\mathbb{A}}$ are identified. We formally state this result in the next Corollary.
\begin{corollary}\label{cor:identification}
Suppose Assumptions
\ref{as:random_sampling}--\ref{as:overlap} hold. Then $\tau_{\mathbb{A}}$ is identified.
\end{corollary}
\begin{proof}
The results follow from the identity:
\begin{multline*}
    \mathbb{E}[(Y_{i}(1) - Y_{i}(0))| (X_i, \osi) \in \mathbb{A} ] =\\ \mathbb{E}\left[\left(\frac{W_i}{e(X_i, \osi)} - \frac{1-W_i}{1-e(X_i, \osi)}\right)Y_i \Bigl|(X_i, \osi) \in \mathbb{A} \right],
\end{multline*}
and the fact that $e(X_i,\osi)$ being a conditional expectation of a binary variable is identified under Assumption \ref{as:random_sampling}.
\end{proof}

In the remainder of the paper, we primarily focus on the properties of estimators for $\tilde\tau_{\mathbb{A}}$. 
Such estimators can also be used to estimate $\tau_{\mathbb{A}}$, but the variances are different.
Our focus on $\tilde\tau_{\mathbb{A}}$ is analogous to the focus on the sample average treatment effect in the experimental literature.

\subsection{Discussion}\label{sec:discussion}

Our key  result --  Theorem \ref{th:unc_with_exp_family} -- is related to some well-known identification results. For example, in \citet*{altonji2005cross} a key assumption (Assumption 2.1) requires  that there is an observed variable $Z_i$ such that conditioning on $Z_i$ renders the covariate of interest (the treatment in our case) exogenous. In our setting, the role of this conditioning variable is  played by the balancing score $ \osi$. We show how this property can arise from assumptions on the joint distribution of the  treatment and the other covariates, and how we can make this more plausible by increasing the dimension of balancing scores.

Assumption \ref{as:exp_family} plays the key role in Theorem \ref{th:unc_with_exp_family} and in the previous section we argued that it can be viewed as a natural approximation for the conditional distribution of covariates and treatment indicators. In addition to this statistical justification, it arises in a structural economic model. To this end, consider a setup where each unit is characterized by $(W_i, X_i)$ and then chooses group membership $l$ out of set $L$. For example, $W_i$ can be a scholarship or a voucher, of a student with observed characteristics $X_i$, and $L_{g(i)}$ can be a college or a school this individual has chosen. In this case, $\mathbb{E}[Y_{i}(1) - Y_{i}(0)|L_{g(i)}]$ is an average direct effect of treatment unmediated by the choice $L_{g(i)}$.

Suppose the following function is the indirect utility for individual $i$ from group $l$:
\begin{equation*}
    V_i(l) = V(W_i, X_i, l) + \epsilon_{i}(l)
\end{equation*}
where $\epsilon_{i}(l)$ are i.i.d. extreme-value type-1 shocks. In this case, the probability of selecting a particular option $g$ has the following form:
\begin{equation*}
    \mathbb{E}[\mathbf{1}_{L_{g(i)} = l}|W_i,X_i] = \frac{\exp(V(W_i, X_i, l))}{\sum_{k\in L}\exp(V(W_i, X_i, k)) }.
\end{equation*}
This restricts the joint distribution of $(L_{g(i)},W_i, X_i)$:
\begin{equation*}
f_{W_i,X_i,L_{g(i)}}(w,x,u) =  f(w,x)\exp\left(V(w,x,l)\right),
\end{equation*}
and thus as long as $V(w,x, u) =\eta^\top(l)S(w,x)+\eta_0(l)$ Assumption \ref{as:exp_family} is satisfied. This discussion shows that exponential family restriction is a natural approximation in a structural model of choice.

In practice, balancing scores are unknown and need to be selected by researchers. This choice should reflect domain knowledge and a full treatment of this problem is an open question. However, we provide a suggestion for systematically selecting balancing scores in the case where we have a large set of potential balancing scores that includes all the relevant ones but also some that are not relevant. Intuitively we would like a selection procedure to select more balancing scores in settings where we have a lot of units per group and if the distributions vary substantially across groups. Assumption \ref{as:exp_family} implies that the conditional probability that a unit in the sample is from group $l$, conditional on $(W_i,X_i)$ and conditional on the set of $L_1,\ldots,L_M$, has a multinomial logit form:
\[ {\rm pr}(L_{g(i)}=l|W_i,X_i,L_1,\cdots,L_M)=\frac{\exp(\eta_0(L_g)+\eta(L_g)^\top S(W_i,X_i))}
{\sum_{k=1}^\numcl \exp(\eta_0(L_{k})+\eta(L_{k})^\top S(W_i,X_i))}.\]
Hence the problem of selecting the  balancing scores is similar to the problem of selecting covariates in a multinomial logistic regression model. Given a large set of potential  balancing scores, we can use standard covariate selection methods, such as LASSO (\citet*{tibshirani1996regression}) to select a sparse set of relevant ones. In practice,   balancing scores are likely to be correlated, and other selection methods such as Adaptive Elastic Net (\citet*{zou2009adaptive}) might be more appropriate. See also \citet*{de2019identifying} for the application of this algorithm to the identification of unobserved network structure with panel data.

\subsection{Estimation and Inference}
This section collects several inference results for a class of semiparametric estimators of $\tau_{\mathbb{A}}$ and $\tilde\tau_{\mathbb{A}}$. All proofs can be found in Appendix \ref{ap:sem_inf}. For further use, we use the following notation for the conditional mean, propensity score, and residuals:
\begin{equation*}
\begin{cases}
\mu(W_i,X_i,\osi)  \equiv \mathbb{E}[Y_i|W_i,X_i,\osi],\\
e(X_i,\osi) \equiv  {\rm pr}(W_i=1|X_i,\osi),\\
\varepsilon_i(w) \equiv  Y_i(w) - \mu(w,X_i,\osi).
\end{cases}
\end{equation*}
These expectations are defined using Assumption \ref{as:random_sampling}, which determines the distribution of $\osi$. In particular, as we discussed in Section \ref{section:simple}, the distribution of $\osi$ changes with the number of observed units in each group. We restrict moments of the corresponding errors:
\begin{assumption}\label{as: moment_cond}{\sc(Moment conditions)} For $w \in \{0,1\}$
\begin{equation*}
 \mathbb{E}[\varepsilon_i^2(w)|\obi] < K \text{ a.s.},\quad  \mathbb{E}[\varepsilon_i^4(w)]<\infty.
\end{equation*}
\end{assumption}
\noindent These conditions hold for bounded potential outcomes, which is the leading case in practice. In the case of unbounded outcomes, the first condition imposes a restriction on the degree of heteroscedasticity.

We will use $\hat \mu_g(\cdot)$ and $\hat e_g(\cdot)$ for generic estimators of $\mu(\cdot)$ and $e(\cdot)$. Subscript $g$ is used to allow for cross-fitting (\textit{e.g.,} \citet{chernozhukov2018double}); we assume that researchers use $K$-fold group-level cross-fitting. In particular, researchers split the observed $M$ groups into $K$ random subsamples, and for each group $g$ construct estimators $\hat \mu_g(\cdot)$, $\hat e_g(\cdot)$ using the data from $K-1$ subsamples that do not contain group $g$. Define indicator variable $A_i \equiv \mathbf{1}_{(X_i,\osi) \in \mathbb{A}}$ and let $\ova\equiv\frac{1}{\numcl} \sum_{g=1}^M\frac{1}{N_g}\sum_{i:g(i) = g}A_i$ be the estimate of the share of units for which we have overlap. We assume  the generic estimators $\hat e_g$ and $\hat \mu_g$ satisfy several high-level consistency properties standard in the program evaluation literature: 

\begin{assumption}\label{as:cons_prop_score}{\sc (High-level conditions)} The following conditions are satisfied for $\hat e_g(\cdot)$ and $\hat \mu_g(\cdot)$ as $M$ goes to infinity: $(i)$  $\frac{\eta}{2} < \hat e_g(X_i,\osi) <1-\frac{\eta}{2}$ on $\mathbb{A}$, $(ii)$ mean-consistency, where $w \in \{0,1\}$:
  \begin{equation*}
 \begin{aligned}
  &\mathbb{E}\left[\frac{1}{\numcl}\sum_{g=1}^{\numcl} \frac{1}{N_g}\sum_{i: g(i)= g} A_i(e(X_i,\osi) - \hat e_g(\obi))^2\right]  = o(1),\\
  &\mathbb{E}\left[\frac{1}{\numcl}\sum_{g=1}^{\numcl} \frac{1}{N_g}\sum_{i: g(i)= g}  A_i(\mu(w,X_i,\osi) - \hat \mu_g(w,X_i,\osi))^2\right]  = o(1),\\
 \end{aligned}
 \end{equation*}
 $(iii)$ and product rate condition:
 \begin{multline*}
     \frac{1}{\numcl}\sum_{g=1}^\numcl\frac{1}{N_g}\sum_{i: g(i)= g}A_i \left(\hat  \mu_g(W_i,X_i,\osi) -\mu(W_i,X_i,\osi)\right)^2\\
\times\frac{1}{\numcl}\sum_{g=1}^\numcl\frac{1}{N_g}\sum_{i: g(i)= g}A_i\left(\hat e_g(X_i,\osi)- e(X_i,\osi) \right)^2 =
   o_p\left(\frac{1}{\numcl}\right).
 \end{multline*}
\end{assumption}
Under Assumptions \ref{as:random_sampling}, \ref{as:overlap}, and bounded outcomes these conditions are satisfied if covariates are discrete. To see this note that $(i)$ and $(ii)$ are weak consistency restrictions, while each term in $(iii)$ is $O_p\left(\frac{1}{M}\right)$. Under appropriate smoothness restrictions (see \citet*{hansen2022econometrics} for a textbook treatment) these conditions hold for conventional kernel estimators. The use of cross-fitting expands the set of available estimators, allowing researchers to use machine learning estimators (see \citet*{newey2018cross} for details).

For arbitrary functions $m(\cdot), p(\cdot)$ define the following functional (\textit{e.g.}, \citet{robins1995semiparametric,
hahn1998role,
chernozhukov2018double}):
\begin{multline*}
\psi(y,w,x,s,m(\cdot),p(\cdot)) \equiv  m(1,x,s) -m(0,x,s) + \\
\left(\frac{w}{p(x,s)} - \frac{1-w}{1-p (x,s)}\right)(y-m(w,x,s),
\end{multline*}
and its group-level aggregate:
\begin{equation*}
     \rho_g(m(\cdot),p(\cdot))\equiv \frac{1}{N_g}\sum_{i:g(i) = g} A_i
\psi(Y_i,W_i,X_i,\osi,m(W_i,X_i,\os_i),p(X_i,\os_i)).
\end{equation*}
We use $\rho_g(\cdot)$, estimators $\hat \mu_g(\cdot), \hat e_g(\cdot)$, and $\ova$ to define our proposed Generalized Mundlak estimator (GME):
\begin{equation}\label{eq:dr_est}
    \hat \tau_{\rm GME} \equiv  \frac{\sum_{g=1}^{\numcl} \rho_g(\hat \mu_g,\hat e_g)}{\ova}.
\end{equation}
We also define the influence function:
\begin{equation*}
    \xi_g\equiv \sum_{i: g(i)= g}\frac{A_i}{N_g} \left(\frac{W_i}{e(\obi)}-\frac{1-W_i}{1-e(\obi)}\right)(Y_i - \mu(W_i,\obi)). 
\end{equation*}
Our next result shows the properties of $\hat \tau_{\rm GME}$:
\begin{theorem}{\sc(Inference)}\label{th:inference}
Suppose Assumptions  \ref{as:random_sampling}--\ref{as: moment_cond} hold. Then as $M$ goes to infinity $(i)$:
\begin{equation}\label{eq:est_def}
\hat\tau_{\rm GME}= \tilde\tau_\mma+o_p(1) = \tau_\mma + o_p(1),
\end{equation}
and $(ii)$:
\[ \sqrt \numcl(\hat\tau_{\rm GME}-\tilde \tau_{\mma})\stackrel{d}{\longrightarrow}{\cal N}(0,\mmv),\hskip1cm {\rm where}\ \ 
 \mmv=\frac{\mme\left[\xi_g^2\right]}{\mathbb{E}^2[A_i]}.\]
\end{theorem}
This result allows us to conduct inference for $\tilde \tau_\mma$ -- the sample group-weighted version of our estimand. A similar result holds for $\tau_\mma$ with a larger variance that reflects additional randomness in $\tilde\tau_\mma$. To estimate the asymptotic variance $\mmv$ we define the estimated version of $\xi_g$:
\begin{equation*}
\hat\xi_g\equiv \sum_{i: g(i)= g}\frac{A_i}{N_g} \left( \left(\frac{W_i}{\hat e_g(\obi)}-\frac{1-W_i}{1-\hat e_g(\obi)}\right)(Y_i - \hat \mu_g(W_i,\obi)) \right),
\end{equation*}
and use it to construct the estimator:
\begin{equation}\label{eq:dr_var}
\hat \mmv: =\frac{1}{\ova^2}
\frac{1}{\numcl}\sum_{g=1}^\numcl \left(\hat\xi_g-\frac{1}{\numcl}\sum_{g'=1}^\numcl \hat\xi_{g'}\right)^2.
\end{equation}
Next result shows that this estimator is consistent:
\begin{theorem}{\sc(Variance consistency)}\label{th:var_cons}
Suppose Assumptions of Theorem \ref{th:inference} hold. Then the variance estimator is consistent:
\begin{equation*}
\hat \mmv=\mmv +o_p(1).
\end{equation*}
\end{theorem}
Together Theorems \ref{th:inference}, \ref{th:var_cons} imply that standard confidence intervals are asymptotically valid. In particular, if $z_{\alpha}$ is the $\alpha$-quantile of the standard normal distribution then asymptotically
\begin{equation*}
    \tilde\tau_\mma \in \hat \tau_{\rm GME} \pm \sqrt{\frac{\hat \mmv}{\numcl}}z_{\alpha/2}
\end{equation*}
with probability $(1-\alpha)$.

\section{Extensions}

In this section, we discuss three extensions of the ideas introduced in this paper.

\subsection{Quantile Treatment Effects}
Theorem \ref{th:unc_with_exp_family} states that conditional on the covariates and the  balancing scores we have the unconfoundedness condition:
\[W_i \indep\   \Bigl(Y_i(0),Y_i(1)\Bigr)  \ \Bigl|\ X_i, \osi.\]
This implies that we can study estimation of effects other than average treatment effects. This is important in applications where we want to estimate, distributional effects controlling for group-level unobserved heterogeneity.

In particular, for any bounded function $f: \mathbb{R}\rightarrow \mathbb{R}$ we can estimate $\mathbb{E}[f(Y_i(w))]$ using the inverse propensity score:
\begin{equation*}
\mathbb{E}[f(Y_i(w)] = \mathbb{E}\left[\frac{\mathbf{1}_{W_i = w}f(Y_{i})}{e(X_i,\osi)}\right]
\end{equation*}
This allows us to identify quantile treatment effects of the type introduced by \citet*{lehmann2006nonparametrics}. If we are interested in $q$-th quantile of the distribution of $Y_i(w)$ then (under appropriate continuity) we can identify it as a solution to the following problem:
\begin{equation*}
c(q): \mathbb{E}\left[\frac{\mathbf{1}_{W_i = w}\mathbf{1}_{Y_i \le c(q)}}{e(X_i,\osi)}\right]  = q
\end{equation*}
For the standard case under unconfoundedness \citet*{firpo2007efficient} has developed effective estimation methods that can be adapted to this case.

\subsection{Beyond Exponential Families}
Modeling the conditional distribution of $(W_i,X_i)$ given $\ou_g$ using an exponential family is natural for the purposes of this paper. Nevertheless, in some applications, other families can be more appropriate.  In particular, another operational choice is a discrete mixture. Assume that $\oui$ can take a finite number of values $\{u_1,\dots, u_p\}$ with probabilities $\pi_{1},\dots, \pi_{p}$ and the conditional distribution of $W_i, X_i$ given $\oui$ is given by $f(w,x|u)$. Collect all the data that we observe for group $g$ in the following tuple:
\begin{equation*}
\mathcal{D}_g \equiv  ((X_{1},W_{1}),\dots, (X_{N_g}, W_{N_g})).
\end{equation*}
The marginal distribution of this object is given by the following expression:
\begin{equation*}
f_{\mathcal{D}_g}(x_1,w_1,\dots, x_{N_g}, w_{N_g}) =\sum_{k=1}^p\prod_{j =1}^{N_g} f(x_j, w_j| k)\pi_k
\end{equation*}
This implies that the conditional distribution of $\ou_g$ given $\mathcal{D}_g$ has the following form:
\begin{equation*}
\pi(\ou_g = k|x_1,w_1,\dots, x_{N_g}, w_{N_g}) = \frac{\prod_{j =1}^{N_g} f(x_j, w_j| k)\pi_k }{\sum_{k=1}^p\prod_{j =1}^{N_g} f(x_j, w_j| k)\pi_k}.
\end{equation*}
Define $\os_g\equiv  (\pi(\ou_g = 1|\mathcal{D}_g), \dots, \pi(\ou_g = p|\mathcal{D}_g))$ and observe that as long as  assignment is unconfounded given $(X_i,\oui)$ we have the following:
\[ W_i\  \indep\  \Bigl(Y_i(0),Y_i(1)\Bigr)\ \Bigl|\  X_i,\osi.\]
Recent results (\textit{e.g.}, \citet*{allman2009identifiability}) show that $(\pi_1,\dots, \pi_p)$ and $f(w,x|u)$ are nonparametrically identified under quite general assumptions, and \citet*{bonhomme2016estimating} provides a way of estimating these objects. Using these methods we can construct $\hat\os_g$ and use it as a  balancing score. 

\subsection{Beyond Binary Treatments}

So far, we have discussed only the case of binary treatments. Applications with non-binary treatments are also common in empirical work, and our strategy has a natural extension for this case. Following most of the applied work, we consider a linear model for the potential outcomes:
\begin{equation*}
    Y_i(w) = \alpha_i + \tau_i w,
\end{equation*}
where $w$ does not have to be binary. Here heterogeneity in $(\alpha_i,\tau_i)$ can be related to group labels $g(i)$ and covariates $X_i$. Observed outcomes are defined as values of this function at the observed treatments: 
\begin{equation*}
Y_i = Y_i(W_i).
\end{equation*}
We focus on this model for its simplicity, an extension to a more general nonlinear model can be designed in the same way.

Our assumptions remain the same and imply the following unconfoundedness restriction:
\begin{equation*}
    (\alpha_i, \tau_i) \indep W_i | X_i, \osi.
\end{equation*}
Using this restriction, we can consider a general double robust strategy, where we estimate the model for $W_i$ and $Y_i$ using $X_i, \osi$ as regressors (\textit{e.g.}, \citet{chernozhukov2018double}). In particular, let $\hat m_{w,i,g}\equiv \hat m_{w,g}(X_i,\osi)$, and $\hat m_{y,i,g}\equiv \hat m_{y,g}(X_i,\osi)$ be the corresponding estimators (with cross-fitting) for the conditional means of $W_i$ and $Y_i$. Define the residuals:
\begin{equation*}
    \tilde Y_i \equiv Y_i -\hat m_{y,i,g}, \quad \tilde W_i \equiv W_i -\hat m_{w,i,g},
\end{equation*}
and consider the following regression:
\begin{equation}\label{eq:reg_nb_tr}
    \tilde Y_i = \alpha + \tau \tilde W_i + u_i.
\end{equation}
Let $\hat \tau$ be the OLS estimator for $\tau$. Using standard techniques one can show that $\tau$ is a weighted average of $\tau_i$, where the weights are proportional to the variance of $W_i$ conditional on $X_i,\osi$.

Regression (\ref{eq:reg_nb_tr}) is a direct generalization of the strategy based on (\ref{een}). To see this observe that representation (\ref{twee}) does not rely on $W_i$ being binary and can be seen as a particular implementation of (\ref{eq:reg_nb_tr}). Given the wide use of (\ref{een}) in applied work, we view (\ref{eq:reg_nb_tr}) as a natural extension of the standard regression with fixed effects with non-binary treatments. 

This discussion demonstrates that, practically, for our strategy, there are few differences between binary and non-binary treatments. Of course, the key restriction here is Assumption \ref{as:exp_family} that models the distribution of $(W_i,X_i)$ conditional on $L_{g(i)}$. With binary treatments, the group average $\ow_g$ is the only function of $W_i$ we can consider. Even with non-binary treatments, $\ow_g$ is the only function that is effectively used in (\ref{een}). The choice set is much larger with general treatments, and one can use other moments, for example, $\overline {W^2}$. The economic model discussed in Section \ref{sec:discussion} motivates using such objects and suggests a general way of building a  balancing score.

\section{Illustrations}
\subsection{Empirical example}
To illustrate our method, we use the data from \citet*{lacetera2012will}\nocite{data,data_2}, which focuses on the effects of various economic incentives on blood donations. To answer this question, the authors leverage a dataset on blood drives -- visits to particular locations -- conducted by the American Red Cross (ARC).For each drive, the authors have information on various outcomes, such as the number of present potential donors, blood units collected, and the number of people turned down. Each blood drive has multiple characteristics, such as the time of the year when it was conducted, its host, weather conditions on that day, and the way it was organized. Additionally, the data includes whether a reward for donation was offered, its type, and cost. 

We observe multiple drives with the same host with variation in incentives over them. As a result, the dataset has a grouped structure with unit-level variation in outcomes and treatments: each location defines a group $g$, with a blood drive to this location being a unit-level observation $i$. We use $W_i\in\{0,1\}$ for the absence or presence of economic incentive, covariates $X_i$ to describe the date of the drive and weather conditions on that day and $Y_i$ for the outcome of interest -- the number of collected blood units.

Assumption \ref{as:unconf_cl_un} is natural in this setup: appealing to institutional knowledge, the authors argue that conditional on the drive characteristics (date), the assignment of incentives is close to being random. Still, the probability of treatment can vary systematically over hosts, and thus conditioning on them -- $L_{g(i)}$ in our notation -- is crucial to claim unconfoundedness. We can expect hosts to differ in their fundamentals -- a potential number of donors and their elasticity with respect to economic incentives, relationship with ARC managers, and organizational abilities -- and $U_g$ captures these characteristics. Importantly, proximity in $U_g$ space might not necessarily be connected to the proximity in observed geographic locations. Our approach allows researchers to capture this unobserved proximity through  balancing scores. 

Finally, we need to specify the set of  balancing scores $S(W_i, X_i)$ that would justify Assumption \ref{as:exp_family}. In the paper, the authors employ specification (\ref{een}) which is equivalent to (\ref{twee}), and thus implicitly use $S(W_i, X_i) = (W_i, X_i)$. Following their analysis, we include $X_i$ and $W_i$ but also use interactions of functions of $X_i$ -- indicators for a particular season -- with $W_i$. If specific locations have strong seasonal patterns, we can expect ARC managers to react by adjusting incentives. Interactions  then help us to control for this additional selection channel.  

We start with $\numcl  = 2491$ unique hosts $N = 14029$ blood drives. We use temperature, amount of rain and snow, indicators for weekends, and seasons as characteristics $X_i$. For the interactions with $W_i$ we use season indicators. The dimension of  the balancing score is thus equal to $11$. Using constructed $(X_i, \osi)$, we estimate a propensity score model by random forest (\citet*{breiman2001random}) and keep units with estimated probability in $[0.05, 0.95]$. By being very flexible, random forest prediction allows us to drop units from groups without variation in treatment or those where treatment is perfectly predictable by  balancing scores.  We use this to define the set $\mathbb{A}$, with $\ova = 0.45$. We then use linear regression to estimate conditional mean $\mu(W_i, X_i, \osi)$, which is equivalent to (\ref{een}). Given the relatively small dimension of covariates, we do not use cross-fitting. 

\begin{table}[t]
\begin{center}
\begin{tabular}{|l|cccc|}
\hline
& Simple & F.E & IPW & Doubly Robust \\
\hline 
 \hline
Estimate &5.27 & 4.12 & 2.79 & 3.42  \\ 
  S.E.   &0.62  &  0.32& 0.64 & 0.44 \\ 
  \hline 
\end{tabular}
\caption{ $\numcl = 2,491$, $N = 14029$, $\ova = 0.45$. The first column corresponds to the OLS estimator without fixed effects, and the second one -- to the OLS estimator with fixed effects. The third column corresponds to the IPW estimator using the propensity score estimated by random forest, restricted to groups where the estimated score lies in $[0.05,0.95]$. The final column corresponds to the doubly robust estimator with the same propensity score model. The standard errors are robust to heteroscedasticity and within-group correlation. This simulation is conducted using x86-64 architecture. The results we get using Apple M1 Pro chips differ slightly. Using the latter, we get for the IPW estimator 2.58 with a s.e. of 0.65, and for the Doubly Robust estimator 3.50 a s.e. of 0.45. These results are reported as a part of the replication package.
  \label{table:est_res}}
 \end{center}
  \end{table}		

We report our results in Table \ref{table:est_res}, where we compare four different estimators: the OLS estimator for (\ref{een}) without group-level fixed effects, the fixed effects estimator $\htauf$, an inverse-propensity weights (IPW) estimator, and, finally, $\hat \tau_{\rm GME}$. The IPW estimator is defined analogously to the doubly robust estimator (\ref{eq:est_def}) but uses $\hat \mu_{g}(\cdot) \equiv 0$. In all cases, we report heteroscedasticity-robust standard errors corrected for within-group correlation. 

The results are similar across specifications, with the doubly robust estimator lying between the fixed effects estimator and the IPW estimator.  Balancing scores play an important role in predicting the treatment indicators; see Figure \ref{fig:p_model} in Appendix \ref{ap:det}. There are also efficiency gains between the IPW and the doubly robust estimator, with $\hat \tau_{\rm GME}$ having a smaller variance.

Similar to the original estimator used in the paper, our estimator $\hat \tau_{\rm GME}$ relies on (\ref{eq:sutva}). In particular, we do not allow for spillovers across groups. In \citet{lacetera2012will}, the authors argue that this assumption might be too restrictive since the supply of blood in neighboring locations can be interdependent. A possible solution for this problem is to aggregate the data at a higher geographical level and proceed as before, but it would be interesting to develop an alternative approach that uses the original data. Since this issue is orthogonal to the subject of the current paper, we do not attempt this development. 

\subsection{Monte-Carlo Experiments}

We base our simulation results on the dataset described above. In particular, we estimate a separate, host-level propensity score logit model for each location with four regressors: intercept and three seasonal indicators. We drop the hosts where these regressors are collinear, which leaves us with 616 hosts. We then combine the estimated coefficients into 20 groups using the $K$-means algorithm. This produces a model for $e(X_i, \oui)$, where four-dimensional $\oui$ represents the estimated coefficients for one of the 20 groups. We sample $M=1200$ hosts (with replacement) out of the original ones, keeping all characteristics of the blood drives and their total number fixed, and generate treatment assignments for each host and blood drive using constructed $e(X_i, \oui)$. We keep observed outcomes, which is equivalent to assuming a zero treatment effect for each unit. As a result, this simulation has two sources of randomness: sampled hosts and assignments, while both covariates and outcomes are kept fixed. In particular, this means that the simulation neither satisfies Assumption \ref{as:exp_family} with a simple  balancing score nor the regression model (\ref{een}), thus making the comparison between the two methods ambiguous ex-ante. 

We simulate this model 100 times, and for each simulation, we compute two estimators: the linear regression (\ref{een}) and the doubly robust estimator (\ref{eq:dr_est}). In the regression, we use all available covariates (the same as in the empirical exercise of the previous section), and for the propensity score, we use the standard implementation of random forest with 11-dimensional  balancing scores (the same as before). We keep observations with estimated propensity scores between 0.1 and 0.9, treating this set as $\mathbb{A}$. We use (\ref{eq:dr_var}) to estimate the asymptotic variance. We report the results in Table \ref{table:sim_res}. One can see that the doubly robust estimator outperforms the linear regression both in terms of bias and root-mean-square error (RMSE). The bias, however, is not negligible, which is partly a manifestation of misspecification and partly a sample size issue. The average over simulations of the estimated standard errors of $\hat \tau_{\rm GME}$ is equal to $0.58$, while the standard deviation of $\hat \tau_{\rm GME}$ over simulations is equal to $0.54$. This suggests that the variance estimator (\ref{eq:dr_var}) performs well in this simulation.

\begin{table}[t]
\begin{center}
\begin{tabular}{|l|rr|}
  \hline
 & Linear FE & Doubly Robust \\ 
  \hline
Bias & 0.79 & 0.38 \\ 
RMSE & 0.82 & 0.66 \\ 
   \hline
\end{tabular}
\caption{Sample size $\numcl = 1200$, $100$ simulations. The first column corresponds to the standard fixed effects estimator, and the second to the doubly robust estimator. We report average bias over simulations and root-mean-square error.
  \label{table:sim_res}}
 \end{center}
  \end{table}	
  
These results illustrate that our estimator, while not ideal, outperforms the conventional benchmark in the realistic data-driven simulation. We use standard off-the-shelf prediction methods widely available in various statistical packages to construct the estimator. More targeted balancing estimators analogous to those available for cross-sectional data (\textit{e.g.}, \citet*{athey2018approximate,chernozhukov2018adouble, hirshberg2021augmented}) might achieve even better results.

\section{Conclusion}

In this work, we proposed a new approach to identification and estimation in observational studies with unobserved group-level heterogeneity.  The identification argument is based on the combination of random effects and exponential family assumptions. Given this structure, we can identify a specific average treatment effect even in cases where the observed number of units per group is small. From the operational point of view, our approach allows researchers to utilize all the recently developed machinery from the standard observational studies. In particular, we generalize the doubly robust estimator and prove its consistency and asymptotic normality under common high-level assumptions.

\newpage
\bibliographystyle{plainnat}
\bibliography{references}
\newpage
\begin{appendices}
\setcounter{lemma}{0}
\renewcommand{\thelemma}{\Alph{section}\arabic{lemma}}
\setcounter{prop}{0}
\renewcommand{\theprop}{\Alph{section}\arabic{prop}}
\setcounter{corollary}{0}
\renewcommand{\thecorollary}{\Alph{section}\arabic{corollary}}

\section{Identification results}\label{app:ident}
\footnotesize

In the main text, we assumed that $N_g$ is constant, but our results hold more generally. Formally, redefine $\ou_g \equiv (\eta(L_g),N_g)$, and $\os_g\equiv(\sum_{i:g(i) = g}S(W_i,X_i)/N_g,N_g)$, use $\os_{1g}\equiv \sum_{i:g(i) = g}S(W_i,X_i)/N_g$, and assume that $N_g = N(L_g)\ge2$ for some function $N$. We then prove the results using this definition. In the main text, there is no need to include $N_g$ in $\os_g$ as it is constant by Assumption \ref{as:random_sampling}.\\

\noindent \textbf{Proof of Proposition \ref{prop:unc_in_general}}:
\begin{proof}
Since distribution of $(W_i,X_i)$ conditional on $ \mathbb{P}_{g(i)}$ and $L_{g(i)}$ is equal to $ \mathbb{P}_{g(i)}$ and thus does not depend on $L_{g(i)}$, we have a conditional independence restriction:
\begin{equation}\label{eq:pr_0}
    (W_i,X_i) \indep L_{g(i)} | \mathbb{P}_{g(i)}.
\end{equation}
To prove the result, we first show the following condition:
\begin{equation}\label{eq:ad_ind}
    (Y_i(1),Y_i(0))\indep W_i| X_i, \mathbb{P}_{g(i)}, L_{g(i)}.
\end{equation}
Fix an arbitrary bounded function $f(y_1,y_0)$ and consider the following equalities:
\begin{multline}\label{eq:pr_1}
    \mathbb{E}[f(Y_{i}(1),Y_i(0))| W_i,X_i, \mathbb{P}_{g(i)}, L_{g(i)}] =  \mathbb{E}[f(Y_{i}(1),Y_i(0))| W_i,X_i, L_{g(i)}] =\\
    \mathbb{E}[f(Y_{i}(1),Y_i(0))| X_i, L_{g(i)}],
\end{multline}
where the first one follows from Assumption \ref{as:random_sampling} and the fact that $N_g = N(L_g)$, and the second -- from Assumption \ref{as:unconf_cl_un}. We thus have
\begin{multline}\label{eq:der}
    \mathbb{E}[W_i|Y_i(1), Y_i(0),X_i, \mathbb{P}_{g(i)}] =\\ \mathbb{E}[\mathbb{E}[W_i|Y_i(1), Y_i(0),X_i, \mathbb{P}_{g(i)}, L_{g(i)}]|Y_i(1), Y_i(0),X_i, \mathbb{P}_{g(i)}] = \\
\mathbb{E}[\mathbb{E}[W_i|X_i, \mathbb{P}_{g(i)}, L_{g(i)}]|Y_i(1), Y_i(0),X_i, \mathbb{P}_{g(i)}]=\\
\mathbb{E}[\mathbb{E}[W_i|X_i, \mathbb{P}_{g(i)}]|Y_i(1), Y_i(0),X_i,  \mathbb{P}_{g(i)}] = \mathbb{E}[W_i|X_i, \mathbb{P}_{g(i)}],
\end{multline}
where the second equality follows from (\ref{eq:ad_ind}), and the third one -- from (\ref{eq:pr_0}), since it implies that $W_i \indep L_{g(i)} | X_i,\mathbb{P}_{g(i)}$. Since $W_i$ is binary (\ref{eq:der}) implies the result.\footnote{The proof for a non-binary case follows by applying the same chain of equalities to a bounded function of $W_i$.}
\end{proof}

\noindent\textbf{Proof of Lemma \ref{lemma:re}}:
\begin{proof}
The first part of the lemma follows by definition of $\oui$ and Assumptions \ref{as:random_sampling}, \ref{as:exp_family}. To demonstrate this, we view $\left(\{W_i, X_i\}_{i\in g}, N_g\right)$ as an element of $\left(\{0,1\}\times \mathbb{X}\right)^{\infty}\times \mathbb{N}$, where $\mathbb{N}$ is the set of natural numbers and fix an arbitrary bounded function $f_0 \in B(\left(\{0,1\}\times\mathbb{X}\right)^{\infty}\times\mathbb{N})$, where $B(\left(\{0,1\}\times\mathbb{X}\right)^{\infty}\times\mathbb{N})$ is a set of bounded functions defined on $\left(\{0,1\}\times\mathbb{X}\right)^{\infty}\times\mathbb{N}$.\footnote{Formally, we define a mapping that takes an element $\left(\{W_i, X_i\}_{i\in g}, N_g\right)$ and maps it into $\left(\{W_i, X_i\}_{i\in g}\times\{0,x_0\}^{\infty}, N_g\right)$, where $x_0$ is a fixed element of $\mathbb{X}$.} For this function $f_0$, we define the following quantity (we use the fact that $\ou_g$ is a deterministic function of $L_g$):
\begin{multline}
   v(f_0, L_g, \ou_g)\equiv \mathbb{E}\left[f_0\left(\{W_i,X_i\}_{g(i) =g},N_g\right) | L_{g}, \ou_g\right] = \\
   \exp^{N_g}\left(\eta_0(L_g)\right) \int f_0\left(\{w_i,x_i\}_{i=1}^{N_{g}},N_g\right) \prod_{i=1}^{N_g}h(w_i, x_i)\exp^{N_g}\left(\ou_g^\top \frac{1}{N_g}\sum_{i=1}^{N_g}S(w_i,x_i)\right) \prod_{i=1}^{N_g}(d\mu(w_i,x_i)) \equiv \\
    \exp^{N_g}\left(\eta_0(L_g)\right) v_0(f_0,\ou_g).
\end{multline}
Since $ v(1, L_g, \ou_g) =1$ and $\frac{v(f_0,L_g, \ou_g)}{v(1,L_g, \ou_g)} =  \frac{v_0(f_0,\ou_g)}{v_0(1,\ou_g)}$ it follows that $v(f_0, L_g, \ou_g)$ does not depend on $L_g$ proving the first part. The following equalities imply the second part:
\begin{multline}
    \mathbb{E}[W_i|Y_i(1),Y_i(0),X_i,\oui] = 
    \mathbb{E}[\mathbb{E}[W_i|Y_i(1),Y_i(0),L_{g(i)},X_i,\oui] | Y_i(1),Y_i(0),X_i,\oui] =\\
    \mathbb{E}[\mathbb{E}[W_i|L_{g(i)},X_i,\oui] | Y_i(1),Y_i(0),X_i,\oui] = \\
     \mathbb{E}[\mathbb{E}[W_i|X_i,\oui] | Y_i(1),Y_i(0),X_i,\oui] = \mathbb{E}[W_i|X_i,\oui]
\end{multline}
where the second equality follows from Assumption \ref{as:unconf_cl_un}, and the third inequality follows from the first part of the lemma, thus proving the result.
\end{proof}

\noindent\textbf{Proof of Theorem \ref{th:unc_with_exp_family}}:
\begin{proof}
Assumptions \ref{as:random_sampling}, and \ref{as:exp_family} imply the following:
\begin{equation}\label{eq:exp_ind}
    \ou_g \indep \{W_i,X_i\}_{i: g(i)= g} | \os_g.
\end{equation}
Formally, using the same approach as in the proof of Lemma \ref{lemma:re} and its first part, we get the following:
\begin{multline}
   \tilde v(f_0, \os_g, \ou_g)\equiv \mathbb{E}\left[f_0\left(\{W_i,X_i\}_{g(i) =g},N_g\right) | \os_g, \ou_g\right] =
   \mathbb{E}\left[f_0\left(\{W_i,X_i\}_{g(i) =g},N_g\right) | \os_g, \ou_g, L_g\right] = \\
   \exp^{N_g}\left(\eta_0(L_g)\right) \exp^{N_g}\left(\ou_g^\top S_{1g}\right)\int_{ \frac{1}{N_g}\sum_{i=1}^{N_g}S(w_i,x_i) = \os_{1g}} f_0\left(\{w_i,x_i\}_{i=1}^{N_{g}},N_g\right) \prod_{i=1}^{N_g}h(w_i, x_i) \prod_{i=1}^{N_g}(d\mu(w_i,x_i)) \equiv \\
    \exp^{N_g}\left(\eta_0(L_g)\right) \exp^{N_g}\left(\ou_g^\top S_{1g}\right) \tilde v_0(f_0,\os_g).
\end{multline}
Again, using the fact that $\tilde v(1, \os_g, \ou_g) =1$ we get that $\tilde v(f_0, \os_g, \ou_g)$ does not depend on $\ou_g$. Using this result, we get for an arbitrary bounded function $f(y_1,y_0)$:
\begin{multline}
    \mathbb{E}[f(Y_{i}(1),Y_{i}(0))|W_i, X_i, \osi] = \\
    \mathbb{E}[\mathbb{E}[f(Y_{i}(1),Y_{i}(0))|W_i, X_i, \osi,\oui]|W_i, X_i,\osi] = \\
    \mathbb{E}[\mathbb{E}[f(Y_{i}(1),Y_{i}(0))|X_i,\oui]|W_i, X_i,\osi]  =  \mathbb{E}[f(Y_{i}(1),Y_{i}(0))| X_i,\osi],
\end{multline}
where the second equality follows from Lemma \ref{lemma:re} and Assumption \ref{as:random_sampling}, and the last equality follows from (\ref{eq:exp_ind}). 
\end{proof}

\section{Inference results}\label{ap:sem_inf}
\noindent{\bf{Notation:}} We are using standard notation from the empirical processes literature adapted to our setting. For any \textbf{group-level} random vector $X_g$ define $\mathbb{E}_{\numcl}(X_g) \equiv  \frac{1}{\numcl} \sum_{g=1}^\numcl X_g$. Define $B_i = (X_i,\overline{S}_{g(i)})$ and $D_i \equiv  (W_i,B_i)$ and the following objects:
\begin{equation}
\begin{aligned}
    &\mu_i\equiv \mathbb{E}[Y_i|D_i],\quad \hat\mu_{i,g}\equiv  \hat \mu_g(D_i),\quad\mu_{w,i}\equiv \mathbb{E}[Y_i|W_i= w, B_i], \quad \hat\mu_{w,i,g}\equiv \hat \mu_g(w,B_i),\\
     &e_i\equiv \mathbb{E}[W_i|B_i], \quad \hat e_{i,g}\equiv \hat e_g(B_i).
\end{aligned}
\end{equation}

\textbf{Proof of Theorem \ref{th:inference}}:
\begin{proof}
We separate $\rho_g(m,p)$ into two parts:
\begin{multline}
\rho_g(m,p) =  \sum_{i: g(i)= g}\frac{1}{N_g}A_i \left(m(1,B_i) + \frac{W_i}{p(B_i)}(Y_i -m(1,B_i))\right) -\\
 \sum_{i: g(i)= g}\frac{1}{N_g}A_i \left(m(0,B_i) + \frac{1-W_i}{1- p(B_i)}(Y_i - m(0,B_i))\right) = \rho_{1g}( m, p) -  \rho_{0g}(m,p).
\end{multline}
We analyze $ \rho_{1g}(\hat \mu_g,\hat e_g)$; the other term is analyzed in the same way. We have the following expansion:
\begin{multline}
    \rho_{1g}(\hat \mu_{1,g},\hat e_g) =\rho_{1g}(\mu, e) + 
  \sum_{i: g(i)= g}\frac{1}{N_g}A_i \left(\left(\hat  \mu_{1,i,g} -\mu_{1,i}\right)\left(1 -  \frac{W_i}{ e_i}\right)\right) + \\
 \sum_{i: g(i)= g}\frac{1}{N_g}A_i \left(\left(\hat  \mu_{i,g} -\mu_{i}\right)W_i\left(\frac{\hat e_{i,g}- e_i}{ e_i\hat e_{i,g}} \right)\right)+
\sum_{i: g(i)= g}\frac{1}{N_g}A_i(Y_i -  \mu_{i})W_i\left(\frac{1}{ \hat e_{i,g}} -  \frac{1}{ e_i}\right) =\\
\rho_{1g}(\mu, e)+ R_{1g} + R_{2g} + R_{3g}.
\end{multline}
We analyze the last three terms separately. By cross-fitting, $\hat \mu_{k,i,g}$ is constructed without using the data from group $g$ and since $\mathbb{E}[A_i(1-\frac{W_i}{e_i})|B_i] = 0$, for the first term we have:
\begin{equation}
    \mathbb{E}\left[\mathbb{E}_{\numcl}R_{1g}\right] = 0.
\end{equation}
We then bound its second moment:
\begin{multline}
\mathbb{E}[\left(\mathbb{E}_{\numcl}R_{1g}\right)^2] \le \frac{L}{\numcl} \mathbb{E}\left[\mathbb{E}_{\numcl}\sum_{i: g(i)= g}\frac{1}{N_g}A_i \left(\hat  \mu_{1,i,g} -\mu_{1,i}\right)^2\left(1 -  \frac{W_i}{ e_i}\right)^2\right] \le \\
\frac{1-\eta}{\eta}\frac{L}{\numcl} \mathbb{E}\left[\mathbb{E}_{\numcl}\sum_{i: g(i)= g}\frac{1}{N_g}A_i \left(\hat  \mu_{1,i,g} -\mu_{1,i}\right)^2\right] = o\left(\frac{1}{\numcl}\right).
\end{multline}
From Chebyshev inequality it then follows  $\mathbb{E}_{\numcl}R_{1g} = o_p\left(\frac{1}{\sqrt{\numcl}}\right)$. The second term is bounded in the following way:
\begin{multline}
   \left|\mathbb{E}_{\numcl}R_{2g}\right| \le \frac{1}{\eta^2}\left|\mathbb{E}_{\numcl}\sum_{i: g(i)= g}\frac{1}{N_g}A_i \left(\hat  \mu_{i,g} -\mu_i\right)\left(\hat e_{i,g}- e_i \right)\right| \le\\
   \frac{1}{\eta^2}\sqrt{\mathbb{E}_{\numcl}\sum_{i: g(i)= g}\frac{1}{N_g}A_i \left(\hat  \mu_{i,g} -\mu_i\right)^2}\sqrt{\mathbb{E}_{\numcl}\sum_{i: g(i)= g}\frac{1}{N_g}A_i\left(\hat e_{i,g}- e_i \right)^2} =
   o_p\left(\frac{1}{\sqrt{\numcl}}\right). 
\end{multline}
Similarly to the first term, because $e_{i,g}$ is constructed without using data from the group $g$ and $\mathbb{E}[Y_i - \mu_i|D_i] = 0$, we have for the third term
\begin{equation}
     \mathbb{E}\left[\mathbb{E}_{\numcl}R_{3g}\right] = 0,
\end{equation}
and bounded second moment:
\begin{multline}
    \mathbb{E}[\left(\mathbb{E}_{\numcl}R_{3g}\right)^2] \le \frac{L}{\numcl}\frac{16}{\eta^4} \mathbb{E}\left[\mathbb{E}_{\numcl}\sum_{i: g(i)= g}\frac{1}{N_g}A_i(Y_i -  \mu_i)^2\left(\hat e_{i,g} - e_i\right)^2 \right] \le\\
    \frac{L}{\numcl}\frac{16K}{\eta^4} \mathbb{E}\left[\mathbb{E}_{\numcl}\sum_{i: g(i)= g}\frac{1}{N_g}A_i\left(\hat e_{i,g} - e_i\right)^2 \right] = o\left(\frac{1}{\numcl}\right).
\end{multline}
From Chebyshev inequality it follows that $\mathbb{E}_{\numcl}R_{3g} = o_p\left(\frac{1}{\sqrt{\numcl}}\right)$. Combining bounds for $R_{1g}, R_{2g}, R_{3g}$ together we get 
\begin{equation}
\mathbb{E}_{\numcl} \rho_{1g}(\hat \mu_g, \hat e_g) = \mathbb{E}_{\numcl} \rho_{1g}(\mu, e) + o_p\left(\frac{1}{\sqrt{\numcl}}\right).
\end{equation}
Repeating the same arguments for $\hat \rho_{0g}$ and combining the results we get
\begin{equation}
    \mathbb{E}_{\numcl}  \rho_g(\hat \mu_g,\hat e_g) =  \mathbb{E}_{\numcl}\sum_{i: g(i)= g}\frac{1}{N_g}A_i \tau(B_i)  + \mathbb{E}_{\numcl} \xi_g + o_p\left(\frac{1}{\sqrt{\numcl}}\right).
\end{equation}
By the LLN we have $\mathbb{E}_{\numcl}\sum_{i: g(i)= g}\frac{A_i}{N_g} = \mathbb{E}[A_i] + o_p(1)$ and thus we have the following:
\begin{equation}\label{eq:final_rep}
   \hat \tau_{dr}  - \frac{\mathbb{E}_{\numcl}\sum_{i: g(i)= g}\frac{1}{N_g}A_i \tau(B_i)}{\mathbb{E}_{\numcl}\sum_{i: g(i)= g}\frac{A_i}{N_g} } = \frac{\mathbb{E}_{\numcl} \xi_g}{ \mathbb{E}[A_i]} + o_p\left(\frac{1}{\sqrt{\numcl}}\right).
\end{equation}
Since $\mathbb{E}[\xi_g] = 0$ and $\xi_g$ are i.i.d., the two needed results follow from (\ref{eq:final_rep}). First, by the LLN, we get consistency:
\begin{equation}
    \hat \tau_{dr} = \frac{\mathbb{E}_{\numcl}\sum_{i: g(i)= g}\frac{1}{N_g}A_i \tau(B_i)}{\mathbb{E}_{\numcl}\sum_{i: g(i)= g}\frac{A_i}{N_g} } + o_p(1) = \mathbb{E}[\tau(B_i)|B_i \in \mathbb{A}] + o_p(1).
\end{equation}
Second, by the CLT, we get the asymptotic distribution of our estimator around the conditional estimand:
\begin{equation}
     \sqrt{\numcl}\left(\hat \tau_{dr} - \frac{\mathbb{E}_{\numcl}\sum_{i: g(i)= g}\frac{1}{N_g}A_i \tau(B_i)}{\mathbb{E}_{\numcl}\sum_{i: g(i)= g}\frac{A_i}{N_g} } \right) \Rightarrow \mathcal{N}\left(0, \mathbb{V}\right),
\end{equation}
where $\mathbb{V} \equiv \frac{\mathbb{E}[\xi_g^2]}{\mathbb{E}^2[A_i]}$.
\end{proof}

\noindent{\bf Proof of Theorem \ref{th:var_cons}}: 
\begin{proof}
Similarly to the previous proof we can divide $\xi_g$ into two parts $\xi_{1g}$ and $\xi_{0g}$. We will analyze $\xi_{1g}$, the analysis for $\xi_{0g}$ is the same. We have the following decomposition:
\begin{multline}
\hat \xi_{1g} - \xi_{1g} = \sum_{i: g(i)= g}\frac{1}{N_g}A_i \left(\left(\mu_i- \hat  \mu_{i,g} \right)\frac{W_i}{\hat e_{i,g}}\right) + \\
\sum_{i: g(i)= g}\frac{1}{N_g}A_i (Y_i -  \mu_i)W_i\left(\frac{1}{ \hat e_{i,g}} -  \frac{1}{ e_i}\right) = R_{11g} + R_{12g}  
\end{multline}
For the first term, we have the following bound:
\begin{equation}
\left(\mathbb{P}_nR_{11g}\right)^2 \le \mathbb{E}_{\numcl}\frac{1}{N_g}\sum_{i: g(i)= g} A_i  \left(\left(\mu_i- \hat  \mu_{i,g} \right)^2\frac{W_i}{\hat e^2_{i,g}}\right) \le \frac{4}{\eta^2} \mathbb{E}_{\numcl}\frac{1}{N_g}\sum_{i: g(i)= g} A_i  \left(\left(\mu_{1,i}- \hat  \mu_{1,i,g} \right)^2\right)  = o_p(1),
\end{equation}
where the last inequality follows from part (i) of Assumption \ref{as:cons_prop_score}, and the last equality follows from part (ii).

For the second term, we have the following bound:
\begin{multline}
\left(\mathbb{E}_{\numcl}R_{12g}\right)^2 \le \mathbb{E}_{\numcl} \frac{1}{N_g} \sum_{i: g(i)= g}A_iW_i\varepsilon_i^2 \frac{(\hat e_{i,g} - e_i)^2}{\hat e^2_{i,g} e^2_i} \le 
\sqrt{\left(\mathbb{E}_{\numcl} \frac{1}{N_g} \sum_{i: g(i)= g}A_i \frac{(\hat e_{i,g} - e_i)^4}{\hat e^4_{i,g} e^4_i}\right)\left(\mathbb{E}_{\numcl} \frac{\sum_{i: g(i)= g}A_iW_i\varepsilon_i^4}{N_g} \right)}\le \\
 \frac{4}{\eta^4}\sqrt{\left(\mathbb{E}_{\numcl} \frac{1}{N_g} \sum_{i: g(i)= g}A_i(\hat e_{i,g} - e_i)^2\right) \left(\mathbb{E}_{\numcl} \frac{1}{N_g} \sum_{i: g(i)= g}A_iW_i\varepsilon_i^4\right)}=
 o_p(1) O_p(1) = o_p(1),
\end{multline}
where for the last inequality, we used $\frac{(\hat e_{i,g} - e_i)^4}{\hat e^4_{i,g} e^4_i} \le  \frac{(\hat e_{i,g} - e_i)^2}{\hat e^4_{i,g} e^4_i} \le \frac{16}{\eta^8}(\hat e_{i,g} - e_i)^2$, which follows from part (i) of Assumption \ref{as:cons_prop_score}. For the last equality we used $\mathbb{E}_{\numcl} \frac{1}{N_g} \sum_{i: g(i)= g}A_iW_i\varepsilon_i^4 \le \mathbb{E}_{\numcl} \frac{1}{N_g} \sum_{i: g(i)= g}\varepsilon_i(1)^4 = \frac{1}{N} \sum_{i=1}^N \varepsilon_i(1)^4 = O_p(1)$ by Assumption \ref{as: moment_cond}.

Putting these results together, we have the following:
\begin{multline}
\mathbb{E}_{\numcl}(\hat \xi_{1g} + \hat \xi_{2g})^2  -  \mathbb{E}_{\numcl}(\xi_{1g} + \xi_{2g})^2 = \mathbb{E}_{\numcl}(\xi_{1g} + \xi_{2g} + R_{11g}+R_{12g} + R_{01g}+R_{02g})^2  - \mathbb{E}_{\numcl}(\xi_{1g} + \xi_{2g})^2 = \\
 \mathbb{E}_{\numcl}(\xi_{1g} + \xi_{2g})(R_{11g}+R_{12g} + R_{01g}+R_{02g}) +   \mathbb{E}_{\numcl}(R_{11g}+R_{12g} + R_{01g}+R_{02g})^2 \le \\
 \sqrt{ \mathbb{E}_{\numcl}(\xi_{1g} + \xi_{2g})^2 4 \mathbb{E}_{\numcl}(R^2_{11g}+R^2_{12g} + R^2_{01g}+R^2_{02g})} +  \mathbb{E}_{\numcl}(R^2_{11g}+R^2_{12g} + R^2_{01g}+R^2_{02g}) = \\
 \sqrt{O_p(1) o_p(1)} + o_p(1) = o_p(1)
\end{multline}
This argument also implies that $\mathbb{E}_{\numcl}(\hat \xi_{1g}) =  \mathbb{E}_{\numcl}(\xi_{1g})= o_p(1)$ and thus we have the final result:
\begin{multline}
\frac{1}{(\overline A)^2}\left(\mathbb{E}_{\numcl}(\hat \xi_{1g}  + \hat \xi_{2g})^2 - \left(\mathbb{E}_{\numcl}(\hat \xi_{1g}  + \hat \xi_{2g})\right)^2\right)- \frac{1}{\mathbb{E}^2[A_i]}\mathbb{E}_{\numcl}(\xi_{1g} +\xi_{2g})^2 = \\
\frac{1}{(\overline A)^2}\left( \mathbb{E}_{\numcl}(\hat \xi_{1g} + \hat \xi_{2g})^2 - \mathbb{E}_{\numcl}(\xi_{1g} +\xi_{2g})^2 \right) + \left(\frac{1}{(\overline A)^2} - \frac{1}{\mathbb{E}^2[A_i]}\right)  \mathbb{E}_{\numcl}(\xi_{1g} +\xi_{2g})^2   + O_p(1)o_p(1)=\\
O_p(1) o_p(1) + o_p(1) O_p(1) + O_p(1)o_p(1) = o_p(1)
\end{multline}
\end{proof}

\newpage
\section{Estimation Details}\label{ap:det}

\begin{figure}[h]
    \centering
    \includegraphics[scale = 0.75]{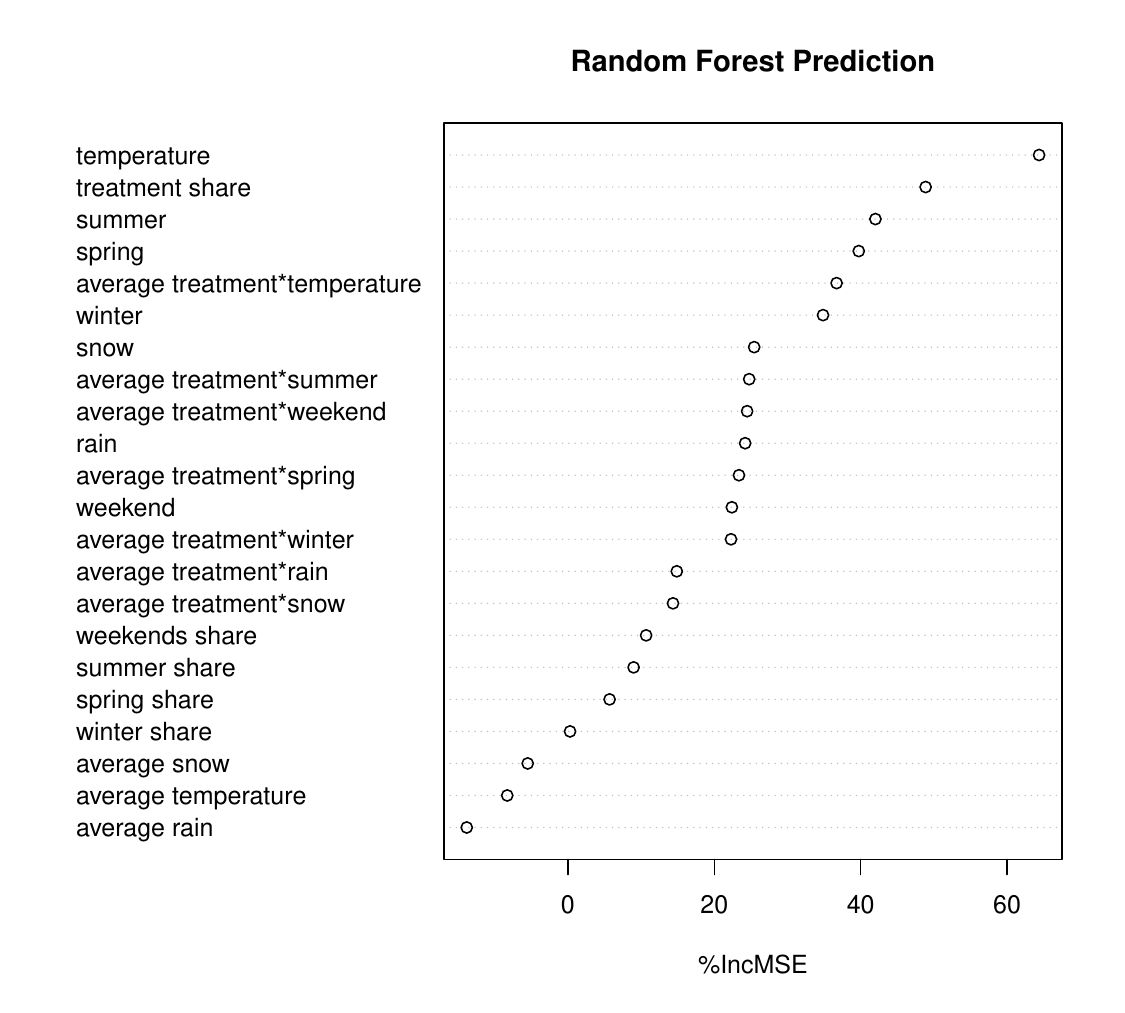}
    \caption{$\numcl = 2491$, $N = 14029$. Each row is the \% increase in the mean-squared error if the corresponding factor was dropped from the propensity score model estimated by the random forest.}
    \label{fig:p_model}
\end{figure}

\end{appendices}

\end{document}